\setlist{noitemsep}
\tikzset{font=\footnotesize}
\newtheorem{theorem}{Theorem}
\theoremstyle{definition}
\newtheorem{definition}{Definition}
\newtheorem{proc}{Procedure}
\newcounter{CommentCounter}
\definecolor{darkspringgreen}{rgb}{0.09, 0.45, 0.27}
\definecolor{4}{RGB}{129,15,124}
\definecolor{5}{RGB}{140,150,198}
\definecolor{6}{RGB}{191,211,230}
\pgfplotsset{
	every semilogx axis/.append style={
		cycle list name=colorbrewer-blues,
		ylabel near ticks,
		no markers,
		every axis plot/.append style={very thick},
		legend cell align=left,
		legend style={draw=none,at={(0,1)},	anchor=north west}
	}
}
\DeclareMathOperator*{\argmax}{argmax}
\newmdenv[
    innerlinewidth=0.5pt,
    roundcorner=4pt,
    linecolor=red,
    innerleftmargin=6pt,
    innerrightmargin=6pt,
    innertopmargin=6pt,
    innerbottommargin=6pt]{todobox}
\newcounter{scheme}
\newcommand{\para}[1]{\vspace{1mm}\noindent\textbf{#1}} 
\newcommand{\parait}[1]{\vspace{1mm}\noindent\textit{#1}} 
\newcommand{\name}{\textsc{VoteAgain}\xspace}
\newcommand{\group}{\mathbb{G}}
\newcommand{\generator}{g}
\newcommand{\grouporder}{p}
\newcommand{\Zp}{\mathbb{Z}_{\grouporder}}
\newcommand{\randin}{\in_R}
\renewcommand{\secpar}{\ell} 
\newcommand{\SPK}{\textsf{SPK}}
\newcommand{\mixdecprotocol}{\code{Vote.MixDecryptTally}}
\newcommand{\eclabel}{EC}
\newcommand{\eckeygen}{\code{\eclabel.KeyGen}}
\newcommand{\ecenc}{\code{\eclabel.Enc}}
\newcommand{\ecdec}{\code{\eclabel.Dec}}
\newcommand{\ctxt}{c}
\newcommand{\lastvoter}{n}
\newcommand{\lasttrustee}{t}
\newcommand{\threshold}{k}
\newcommand{\lastballot}{n_B}
\newcommand{\lastdummy}{n_D}
\newcommand{\vid}{vid}
\newcommand{\candidate}{c}
\newcommand{\ballot}{\beta}
\newcommand{\ballotnumber}{m}
\newcommand{\voteproof}{\pi}
\newcommand{\encryptedvote}{\code{v}}
\newcommand{\encryptedvid}{\gamma}
\newcommand{\encryptedindex}{I}
\newcommand{\pa}{\text{PA}\xspace}
\newcommand{\ts}{\text{TS}\xspace}
\newcommand{\pbb}{\text{PBB}\xspace}
\newcommand{\code}[1]{\textsf{#1}\xspace}
\newcommand{\setup}{\code{Setup}}
\newcommand{\gettoken}{\code{GetToken}}
\newcommand{\vote}{\code{Vote}}
\newcommand{\valid}{\code{Valid}}
\newcommand{\filter}{\code{Filter}}
\newcommand{\verifyfilter}{\code{VerifyFilter}}
\newcommand{\tally}{\code{Tally}}
\newcommand{\verifytally}{\code{VerifyTally}}
\renewcommand{\verify}{\code{Verify}}
\renewcommand{\adv}{\ensuremath{\mathcal{A}}\xspace} 
\newcommand{\nrcorrupted}{n_C}
\newcommand{\gamebit}{b}
\newcommand{\voter}{\mathcal{V}}
\newcommand{\candidatelist}{\mathcal{C}}
\newcommand{\electoralroll}{\mathcal{E}}
\newcommand{\tstag}{\theta}
\newcommand{\ballotgroup}{G}
\newcommand{\nrballotgroups}{\kappa}
\newcommand{\nrballotsingroup}{\chi}
\newcommand{\verifbexp}[3]{\code{Exp}^{\code{ver},#1}_{#2,#3}}
\newcommand{\oraclevotesimple}{\oraclesym\code{vote}}
\newcommand{\partialtally}{\bar{\rho}}
\newcommand{\partialoperator}{\star_R}
\newcommand{\corruptedset}{\code{C}}
\newcommand{\checked}{\code{Verified}}
\newcommand{\nrbad}{n_B}
\newcommand{\nrunchecked}{n_U}
\newcommand{\nrverified}{n_V}
\newcommand{\corruptionevents}{\code{C}}
\newcommand{\honestvotes}{\code{HVote}}
\newcommand{\nrtokens}[1]{\code{\#tokens}(#1)}
\newcommand{\ctr}{\code{ctr}}
\newcommand{\nrchecks}{\lambda}
\newcommand{\TS}{\textsf{TS}\xspace}
\newcommand{\PA}{\textsf{PA}\xspace}
\renewcommand{\pk}{\code{pk}}
\renewcommand{\sk}{\code{sk}}
\newcommand{\tspk}{\pk_{\code{\TS}}}
\newcommand{\tssk}{\sk_{\code{\TS}}}
\newcommand{\selectedvotelistdummies}{\selectedvotelist_D}
\newcommand{\openedrandomizers}{R}
\newcommand{\papk}{\pk_{\code{\PA}}}
\newcommand{\pask}{\sk_{\code{\PA}}}
\newcommand{\votepk}{\pk_{\code{T}}}
\newcommand{\votesk}{\sk_{\code{T}}}
\newcommand{\voteski}{\sk_{\code{T},i}}
\newcommand{\nrcandidates}{n_C}
\newcommand{\votekeygen}{\code{Vote.DKeyGen}}
\newcommand{\votekeygensimple}{\code{Vote.KeyGen}}
\newcommand{\voteenc}{\code{Vote.Enc}}
\newcommand{\voteverify}{\code{Vote.Verify}}
\newcommand{\votedec}{\code{Vote.Dec}}
\newcommand{\voteverifytally}{\code{Vote.VerifyTally}}
\newcommand{\votezeroenc}{\code{Vote.ZEnc}}
\newcommand{\result}{r}
\newcommand{\tallyproofdec}{\Pi}
\newcommand{\decryptedvid}[1]{\overline{\vid}_{#1}}
\newcommand{\decryptedindex}[1]{\overline{\ballotnumber}_{#1}}
\newcommand{\selectedvotelist}{\mathcal{S}}
\newcommand{\selectedvote}[1]{V_{#1}}
\newcommand{\preselectedvote}[1]{\overline{V}_{#1}}
\newcommand{\shuffleproof}{\pi_{\sigma}}
\newcommand{\decryptproof}{\pi^{\textsf{dec}}}
\newcommand{\reencryptionproof}{\pi^{\textsf{sel}}}
\newcommand{\resultfunction}{\rho}
\newcommand{\vidspace}{\group}
\newcommand{\resultspace}{\mathbb{R}}
\newcommand{\extractoralgorithm}{\code{Extract}}
\newcommand{\validindividual}{\code{ValidInd}}
\newcommand{\strippedballot}{\ballot'}
\newcommand{\shuffledballot}{\ballot''}
\newcommand{\strippedballotlist}{B'}
\newcommand{\shuffledballotlist}{B''}
\newcommand{\decryptedvidindexlist}{C}
\newcommand{\filteredvotes}{F}
\newcommand{\votingscheme}{\mathcal{V}}
\newcommand{\bprivexp}[3]{\code{Exp}^{\code{bpriv},#1}_{#2,#3}}
\newcommand{\simfilter}{\code{SimFilter}}
\newcommand{\simproof}{\code{SimTally}}
\newcommand{\oraclesym}{\mathcal{O}}
\newcommand{\oracles}{\oraclesym}
\newcommand{\oraclevote}{\oraclesym\code{voteLR}}
\newcommand{\oraclecast}{\oraclesym\code{cast}}
\newcommand{\oracleboard}{\oraclesym\code{board}}
\newcommand{\oracletally}{\oraclesym\code{tally}}
\newcommand{\filteradditions}{\Phi}
\newcommand{\BB}{\code{BB}}
\newcommand{\nmcpaexp}[1]{\code{Exp}^{\code{nm-cpa},#1}_{\adv}}
\newcommand{\oracledec}{\oraclesym\code{dec}}
\newcommand{\nroraclevotes}{n_{\mathcal{O}}}
\newcommand{\sconsexp}{\code{Exp}^{\code{s-cons}}_{\adv, \mathcal{V}}}
\newcommand{\votetoken}{\tau}
\newcommand{\tokensignature}{\signature^{\votetoken}}
\newcommand{\tokencounter}{m}
\newcommand{\nrvoters}{N}
\newcommand{\dummyindices}{\mathcal{D}}
\newcommand{\signkeygen}{\code{Sig.Keygen}}
\newcommand{\signsign}{\code{Sig.Sign}}
\newcommand{\signverify}{\code{Sig.Verify}}
\newcommand{\pksign}{\pk_{\sigma}}
\newcommand{\sksign}{\sk_{\sigma}}
\newcommand{\signature}{\sigma}
\newcommand{\cover}{\mathfrak{C}}
\newcommand{\coversizebase}{k}
\newcommand{\coversize}{|\cover|}
\newcommand{\groupsize}{s}
\newcommand{\nrvotersgroup}{z}
\newcommand{\nrcastvoters}{\nu}
\newcommand{\scorrexp}{\code{Exp}^{\code{s-corr}}_{\adv, \mathcal{V}}}
\begin{document}

\date{}
\title{\Large \bf \name: A scalable coercion-resistant voting system}
\author{
{\rm Wouter Lueks} \\
EPFL, SPRING Lab
\and
{\rm I\~{n}igo Querejeta-Azurmendi} \\
Universidad Carlos III Madrid \\
ITFI, CSIC
\and 
{\rm Carmela Troncoso} \\
EPFL, SPRING Lab
}

\maketitle

\thispagestyle{plain}
\pagestyle{plain}

\begin{abstract}
  The strongest threat model for voting systems considers
  \emph{coercion resistance}: protection against coercers that force 
  voters to modify their votes, or to abstain.
  Existing remote voting systems either do not provide this property; 
  require an expensive tallying phase; or burden users with the need to store
  cryptographic key material and with the responsibility to deceive their coercers.
  We propose \name, a scalable voting scheme that relies on the 
  revoting paradigm to provide coercion resistance.
  \name uses a novel deterministic ballot padding
  mechanism to ensure that coercers cannot see whether a vote has been replaced. 
  This mechanism ensures tallies take \emph{quasilinear time}, 
  making \name the first revoting
  scheme that can handle elections with millions of voters. We prove that \name provides 
  ballot privacy, coercion resistance, and verifiability; and we demonstrate its
  scalability using a prototype implementation of its core cryptographic primitives.
\end{abstract}

\section{Introduction}
  \begin{table*}[htbp]
  \centering
{\small
  \caption{\label{tab:re-voting-settings} Comparison of different voting
    schemes.}
  \begin{tabular}{@{}llllcllll@{}}
    \toprule
    & \multicolumn{3}{c}{\footnotesize{Revoting}} & & & \multicolumn{3}{c}{\footnotesize{Security Properties}} \\
    \cmidrule{2-4} \cmidrule{7-9}
    & \rotatebox[origin=c]{60}{\footnotesize{Deniable}}
    & \rotatebox[origin=c]{60}{\footnotesize{Verif. Filter}}
    & \rotatebox[origin=c]{60}{\footnotesize{Complexity}}
    & \rotatebox[origin=c]{60}{\footnotesize{Crypto state}}
    & \rotatebox[origin=c]{60}{\footnotesize{Authentication}}
    & \rotatebox[origin=c]{60}{\footnotesize{Ballot Privacy}}
    & \rotatebox[origin=c]{60}{\footnotesize{Verifiability}}
    & \rotatebox[origin=c]{60}{\footnotesize{Coercion Res.}}\\
    \midrule
    JCJ \cite{Juels:2005:CEE:1102199.1102213, Clarkson2008, Bursuc2012} & No\footnotemark
                                            & Yes & $n^2$ & Yes & $\threshold$-out-of-$\lasttrustee$ & $\threshold$-out-of-$\lasttrustee$ & $\threshold$-out-of-$\lasttrustee$ & $\threshold$-out-of-$\lasttrustee$ + AC\\
    Black-box \cite{Gjosteen_analysisof} & TTP & No & $n$ & Yes & Unclear & $\threshold$-out-of-$\lasttrustee$ & TTP & TTP \\
    Revote \cite{Locher2016, 193462} & $\threshold$-out-of-$\lasttrustee$ & Yes & $n^2$ & Yes & TTP & $\threshold$-out-of-$\lasttrustee$ & TTP& $\threshold$-out-of-$\lasttrustee$ + AC \\
    Helios \cite{Adida:2008:HWO:1496711.1496734} & \multicolumn{3}{c}{\emph{revoting is not possible}} & No & TTP & $\threshold$-out-of-$\lasttrustee$ & TTP & N/A \\
    \textbf{\name} & TTP & Yes & $n \log n$ & No & TTP & $\threshold$-out-of-$\lasttrustee$ & TTP & TTP\\
    \bottomrule
  \end{tabular}
}
\end{table*}

Remote electronic voting, i.e., voting outside a poll-booth environment, in which
voters cast their ballot from their devices is susceptible to large-scale vote buying and
coercion~\cite{Juels:2005:CEE:1102199.1102213}.
Yet, many deployed electronic voting systems~\cite{Adida:2008:HWO:1496711.1496734,Scytl,HaenniKLD17}
do not support coercion resistance. This might be suitable in Western democracies
where freedom and privacy are well rooted in society.
However, under more authoritarian regimes~\cite{PakistanReport} or in younger democracies~\cite{NorrisWC18},
coercion is a serious problem.

There are two kind of coercion-resistant electronic voting systems in the literature.
The first kind provides users with fake voting credentials that voters use/produce when coerced, enabling deletion of coerced votes~\cite{Juels:2005:CEE:1102199.1102213,Clarkson2008}.
This approach has several downsides: (i) voters need to store their true voting credential on their devices,
(ii) the system cannot give feedback on whether the correct credential was used, and thus voters cannot be sure if their vote has been recorded correctly at the time of voting,
and (iii) voters need to convincingly lie while being coerced which may be a challenge.
The second kind relies on the \emph{revoting paradigm}~\cite{Grimm2006MCI,Post10,193462,Locher2016}.
These schemes avoid the drawbacks associated with the fake-credential
approach by allowing voters to submit fully to coercers, and instead enabling them to
supersede coerced votes by casting a new ballot. This approach requires that the coercer cannot detect whether
a voter has cast new ballots. To achieve this,
state-of-the-art schemes~\cite{193462,Locher2016} \emph{require a quadratic
number of operations}, concretely a pair-wise comparison of all ballots,
to privately filter superseded ballots.
As an example, for the Iowa Democratic caucus with only 176,574 voters,
Achenbach et al.'s solution~\cite{193462} would require 1.8 core years to filter the ballots.

We propose \name, a scalable coercion-resistant (re)voting scheme.
\name's efficiency relies on two key insights:
First, one can hide the number of ballots per user by
\emph{inserting a deterministic number of dummy ballots which
depends solely on the number of voters and the number
of cast ballots}. Thus, it reveals nothing about the number
of ballots cast by individual voters, hiding
any (re)voting patterns induced by voters or coercers.
Second, because of the deterministic nature of the approach
one can \emph{execute filtering in the clear}, reducing
the filtering time \emph{from quadratic to quasilinear}:
$O(n \log n)$ where $n$ is the number of
ballots. As a result, for the Iowa caucus
\emph{our construction requires under 14 core minutes}. We estimate that
\name using 224 cores (less than \$50 on Amazon, or \$75K on dedicated hardware)
can filter hundreds of millions of ballots in hours.

%

We make the following contributions:

\para{\checkmark} We introduce \name, a novel remote electronic revoting scheme based on well defined and widely used cryptographic constructions.

\para{\checkmark} We introduce a novel efficient deterministic padding scheme that hides revoting at a low cost. The complexity of the resulting filtering phase is $O(n \log n)$ where $n$ is the number of ballots. Our experiments show that in many practical scenarios the cost can be even lower.

\para{\checkmark} We show that previous definitions of coercion resistance in the revoting setting are vacuous.
We provide a new coercion-resistance definition and we adapt modern definitions of ballot privacy~\cite{Bernhard:2015:SCA:2867539.2867668} and verifiability~\cite{CortierGKMT16,CortierGGI14} to the revoting setting. We prove that \name satisfies these definitions.

\para{\checkmark} We evaluate the scalability of \name on a prototype
implementation of the core cryptographic primitives. Our results show that \name can support elections with millions of users.

\footnotetext[1]{Revoting is possible, but revotes are not deniable. JCJ instead achieves
  coercion resistance using fake authentication credentials.}


\section{Related Work}

Coercion-resistant voting schemes fall under two categories: either they 
enable voters to generate \emph{fake authentication credentials} or they allow the voter 
to \emph{revote}. Coercion resistant
schemes using fake credentials, introduced by Juels et al.~\cite{Juels:2005:CEE:1102199.1102213} (JCJ), 
are used in several voting schemes~\cite{Clarkson2008,Selections,Bursuc2012,DBLP:conf/fc/AraujoBBT16}. 
In these schemes, the voter has both real and fake authentication credentials (or pre-registered passwords
and panic passwords~\cite{Selections,EssexCH12}).
When coerced, the voter lies to the coercer, using a fake authentication credential (or handling it to the coercer), 
resulting in a non-counted ballot. Ballots cast with the real credential are counted.
These schemes provide the real authentication credential
to the voter during registration phase 
(in which the coercer \textit{must} be absent). The voter must securely store this authetication credentials
for later use, i.e., voters need to maintain cryptographic state.

Coercion resistant schemes based on revoting allow voters to cast multiple ballots and 
then filter these ballots, typically counting the last ballot per voter. 
For such a scheme to be coercion resistant, the filtering stage must be \textit{deniable}~\cite{193462},
i.e., it must not expose which ballots are filtered, as this would expose revoting actions. 
\textit{Black box} filtering where a trusted third party (TTP) performs the filtering privately 
is deniable~\cite{Gjosteen_analysisof}, but not verifiable.
To the best of our knowledge, there exist two publicly-verifiable deniable re-voting schemes~\cite{193462,Locher2016}. 
To obtain public verifiability these schemes use a distributed authority
to compare each pair of ballots (i.e., $O(n^2)$ operations) before 
shuffling to privately mark superseded ballots.
After shuffling, these marks are decrypted and the tallying server verifiably
filters superseded ballots.
As literally specified in these papers, these schemes are \emph{`not efficient
  for large scale elections'}. We confirm in Section
\ref{sec:evaluation} that Achenbach et al.'s scheme~\cite{193462}
cannot efficiently handle small elections of a hundred thousand users. 

Both the JCJ based and the private revoting based schemes offer a 
solution with a $\threshold$-out-of-$\lasttrustee$ assumption for 
coercion resistance. However, on top of that, these schemes require the
existence of Annonymous Channels (AC) to avoid coercion attacks such as 
forced abstention. 

For authentication, most schemes require users to store cryptographic
state~\cite{Juels:2005:CEE:1102199.1102213,Clarkson2008,193462,Locher2016,Bursuc2012,Gjosteen_analysisof,Ryan2015},
or remember special passwords~\cite{Selections,EssexCH12}.
Helios~\cite{Adida:2008:HWO:1496711.1496734} and
Apollo~\cite{DBLP:journals/iacr/GawelKVWZ16} rely on regular username/password.
To improve verifiability (by distributing the trust of the entity deciding which users are eligible voters), some schemes require that
voters authenticate to $\threshold$ out of $\lasttrustee$ parties~\cite{Juels:2005:CEE:1102199.1102213,Clarkson2008,Bursuc2012,DBLP:conf/fc/AraujoBBT16}.
However, this results in a complex registration phase for the user where, additionally, the
coercer is assumed to be absent. Revoting based schemes (including \name) can be extended
to this setting to reduce the trust assumptions required for authentication correctness (and 
hence verifiability).
Table \ref{tab:re-voting-settings} 
summarizes the comparison between \name and previous work. 
\section{System and threat model}	

\para{Actors.}
There are five actors in \name: voters, a polling authority,
a bulletin board, a tally server, and trustees.

\parait{Voters} Voters interact with the polling authority and the public bulletin
  board to cast their ballots. Each voter has the means to
  authenticate herself to the polling authority (e.g., an electronic
  identity card). There are $\lastvoter$ voters.

\parait{Polling Authority (\pa)} The \pa authenticates users and
  provides them with ephemeral voting tokens. Voters use these tokens to sign
  their ballots before posting them to the public bulletin board.

\parait{Public Bulletin Board (\pbb)} The \pbb is an append-only list of 
cast ballots. Ballots are posted during the election phase by the voters.
  During the tally phase, the tally server and trustees post their proofs 
  and results to the bulletin board.
  Ad-hoc implementations~\cite{Heather2009} or 
  blockchain-based implementations~\cite{DBLP:conf/ndss/Garman0M14,Fromknecht2014ADP}
  would be suitable for our \pbb.

\parait{Tally Server (\ts)} The \ts filters the ballots.
  It adds dummy ballots, shuffles the ballots, groups them by
  voter, and selects the last ballot for each voter.

\parait{Trustees} The trustees mix and decrypt the selected ballots to reveal the outcome of the election. Each
trustee has a partial decryption key for a $\threshold$-out-of-$\lasttrustee$
encryption system. 

\newcommand{\xmark}{\ding{55}}%
\begin{table}[t]
  \centering
{\small
  \caption{\label{tab:coercion-differences} Comparison of assumptions in
    pre-election phase and election phase required to mitigate
    coercion attacks fake credentials and revoting based systems.}
  \begin{tabular}{lcc@{}}
    \toprule
    \multicolumn{1}{@{}l}{Assumptions} & Fake Credentials & Revoting \\
    \midrule
    \multicolumn{1}{@{}l}{\emph{Pre-election phase}} & & \\[2ex]
    No coercion & \checkmark & N/A \\[2ex]
    Inalienable authentication & \checkmark & N/A \\[2ex]
    \multicolumn{1}{@{}l}{\emph{Election phase}} & & \\[2ex]
    Lie convincingly & \checkmark & \xmark \\[2ex]
    Coercer absent some point & \multirow{ 2}{*}{\checkmark} & \multirow{ 2}{*}{\checkmark} \\
    during election & &\\[2ex]
    Absence of coercer after & \multirow{ 2}{*}{\xmark} & \multirow{ 2}{*}{\checkmark} \\
    coercion & & \\[2ex]
    Device holding voting secrets or & \multirow{ 2}{*}{\checkmark} & \multirow{ 2}{*}{\xmark}  \\
    need to remember special pwds & & \\[2ex]
    Inalienable authentication & \xmark & \checkmark \\ 
    \bottomrule
  \end{tabular}
}
\end{table}

\para{Threat model.}\label{definition}
We assume an adversary \adv whose goal it is to coerce voters
into casting votes for a particular candidate or to abstain. 
This adversary, although computationally bounded, 
may coerce any voter -- but not \emph{all} voters. 
Under coercion, the coerced voter does exactly as instructed (without
needing to lie).
The coercer learns all information stored and received by the voter at the time of coercion.
We assume that after coercion the coercer does not control a voter for some time before the end
of the election, such that the voter can cast at least one more vote. 
We also assume that the user's means of authentication is \emph{inalienable}~\cite{193462}, 
that is, a coercer can neither eliminate nor duplicate a voter's means of authentication.

While these assumptions are strong, we point out that so are the
assumptions behind coercion resistant solutions
that rely on fake 
credentials~\cite{Juels:2005:CEE:1102199.1102213, Clarkson2008, Bursuc2012} (see
Table~\ref{tab:coercion-differences}). Fake-credential based solutions
assume that users cannot be coerced during registration and hence need inalienable means of 
authentication during this phase;
that users can store and hide cryptographic key material and hence are required to have 
access to where this material is stored during the voting phase;
and that users can lie convincingly. 
These assumptions are not needed in \name. Our construction 
allows users to vote from any device, preventing coercion attacks that rely
on destroying or stealing the voting device. 

In \name, voters authenticate against the PA every time before voting to obtain an ephemeral
voting token. Thus, the PA \emph{must} be honest with respect to verifiability 
and coercion resistance.
To enable quasilinear filtering we also require that the TS is honest with respect to coercion resistance. 
This assumption is stronger than Achenbach et
al.'s $\threshold$-out-of-$\lasttrustee$ assumption on the
trustees~\cite{193462}, but their relaxation comes at a quadratic computational
cost, see Table~\ref{tab:re-voting-settings}.

Finally, we require \name to satisfy the following informal properties. We
formalize them in Section~\ref{formal-sec}. 
Table~\ref{tab:adv-assumptions}
summarizes the trust required in each party for achieving each of the properties.  

\begin{definition}[Ballot privacy\cite{Bernhard:2015:SCA:2867539.2867668}]
  \label{BallSecre}
  Assuming that at least $\threshold$ trustees are honest, no coalition of
  malicious parties (including the PA and TS) can learn the vote of an honest user.
\end{definition}

\begin{definition}[Coercion resistance]
  \label{CoerResis}
  Assuming that the PA and TS are honest, no coercer can use the \pbb to determine if coercion was successful or not, provided that the election outcome does not leak this information.
\end{definition}

\begin{definition}[Verifiability]
  \label{UnivVerif}
  Assuming that the PA is honest, \name guarantees that: 
  (i) the last ballot per voter will be tallied,
  (ii) adversary \adv cannot include more malicious votes in the tally than the number of voters it controls,
  and (iii) honest ballots cannot be replaced.
  If voters do not verify that their ballots are correctly appended to the \pbb, 
 ballots can be dropped or replaced by earlier ballots if those exist.
\end{definition}

\newcommand{\udensdash}[1]{%
    \tikz[baseline=(todotted.base)]{
        \node[inner sep=1pt,outer sep=0pt] (todotted) {#1};
        \draw[densely dashed] (todotted.south west) -- (todotted.south east);
    }%
}%

\begin{table}[tbp]
  \caption{\label{tab:adv-assumptions} Trust assumptions on \name entities
  to achieve each property.}
  \centering
{\small
  \begin{tabular}{@{}lccc@{}}
    \toprule
     & Ballot Privacy & Verifiability & Coercion resistance\\
    \midrule
    \pa & Untrusted & Trusted & Trusted\\
    \ts & Untrusted & Untrusted & Trusted \\
    \pbb & Untrusted & Untrusted & Untrusted \\
    Trustees & $\threshold$-out-of-$\lasttrustee$ & Untrusted & Untrusted \\
    \bottomrule
  \end{tabular}
}
\end{table}

\section{\name: High-level overview}
\label{sec:key-ideas}

We sketch the key ideas of \name. For simplicity, we omit, in this 
section, the zero-knowledge proofs that parties use to show
that they performed operations correctly. We describe the
protocols in detail in Section~\ref{sec:formal-desc}.

\name proceeds in three phases: the pre-election phase, the election phase, and
the tally phase. During the pre-election phase, the polling authority (PA)
assigns to each voter $i$ a random voter identifier $\vid_i$, and a random
initial ballot index $\ballotnumber_i$. These values are known only to the PA.

\para{Casting ballots.}
During the election phase, voters can cast as many votes as they want. To cast a
vote, voter $i$ first authenticates to the PA using her inalienable
authentication means to obtain an ephemeral voting token. This voting token
contains an encrypted voter identifier $\encryptedvid$, containing $\vid_i$, and
an encrypted ballot index $\encryptedindex$, containing $\ballotnumber_i$. After
each authentication, the PA increases $\ballotnumber_i$ by one. Next, the voter
encrypts her choice of candidate as $\encryptedvote$.
Finally, the voter sends the encrypted vote $\encryptedvote$, the encrypted
voter identifier $\encryptedvid$, the encrypted ballot number $\encryptedindex$,
and a signature using the ephemeral token to the bulletin board.

\para{Filtering ballots.}
The encrypted voter identifiers and ballot indices enable the tally server (TS)
to efficiently select the last ballot for each voter. The TS uses the simplest
mechanism possible: It shuffles the ballots, and then decrypts the voter
identifiers and ballot indices. The ballots can then publicly be grouped per
voter, and the last ballot can be identified by inspection. Finally, the
trustees tally the last ballot of each voter. See
Figure~\ref{fig:overview-no-dummies}.

\para{Hiding patterns using dummies.}
By itself, shuffling and filtering is not a coercion-resistant mechanism: a coercer can
perform the 1009 attack~\cite{Smith05}. In this attack, the coercer forces a
voter to cast a specific number of ballots and looks for a group of that size in
the filtering step. If such group does not exist, the coerced voter has revoted.
In \name, the TS inserts a deterministic number of \emph{dummy ballots} and \emph{dummy
voters} before shuffling the ballots to hide such patterns while maintaining the simple public filtering procedure.

\tikzset{
  ptxt/.style={
    inner sep=0mm,
    outer sep=2mm,
    minimum width=6mm,
    minimum height=3mm,
    anchor=center,
  },
  ctxt/.style={
    draw, rectangle, ptxt,
  },
  encvid/.style={ctxt,fill=red!50},
  encidx/.style={ctxt,fill=blue!50},
  encvote/.style={ctxt,fill=green!50},
  smaller/.style={
    minimum width=3mm,
    minimum height=2mm,
  }
}

\begin{figure}
  \centering
  \begin{tikzpicture}[
      ballotgroup/.style={
        matrix of nodes,
        column sep=0.5mm,
        row sep=1mm,
      },
      belowlabel/.style={
        align=center,
        text height=5mm,
        text depth=3mm,
      }
    ]

    \matrix (group1) [ballotgroup] {
      |[encvid]| 135 & |[encidx]| 25 & |[encvote]| $c_1$ \\
      |[encvid]| 144 & |[encidx]| 89 & |[encvote]| $c_2$ \\
      |[encvid]| 144 & |[encidx]| 90 & |[encvote]| $c_2$ \\
      |[encvid]| 135 & |[encidx]| 26 & |[encvote]| $c_2$ \\
    };
    \node [belowlabel,below=-2mm of group1] {1. Original ballots};

    \matrix (group2) [ballotgroup,right=5mm of group1] {
      |[encvid]| 135 & |[encidx]| 26 & |[encvote]| $c_2$ \\
      |[encvid]| 144 & |[encidx]| 90 & |[encvote]| $c_2$ \\
      |[encvid]| 135 & |[encidx]| 25 & |[encvote]| $c_1$ \\
      |[encvid]| 144 & |[encidx]| 89 & |[encvote]| $c_2$ \\
    };
    \node (l2) [belowlabel,below=-2mm of group2] {2. Shuffled ballots};

    \matrix(group4) [ballotgroup,right=5mm of group2] {
      |[ptxt]| 25: & |[encvote]| $c_1$ \\
      |[ptxt]| 26: & |[encvote]| $c_2$ \\
    };
    \node[belowlabel,above=-4mm of group4] {Voter 135};

    \matrix(group5) [ballotgroup, right=of group4, anchor=center] {
      |[ptxt]| 89: & |[encvote]| $c_1$ \\
      |[ptxt]| 90: & |[encvote]| $c_2$ \\
    };
    \node[belowlabel,above=-4mm of group5] {Voter 144};
    \node [belowlabel,right=5mm of l2] {3. Decrypt voter identifier\\and ballot indices,\\ group per voter};
  \end{tikzpicture}
  \vspace{-10mm}
  \caption[Overview of no-dummies]{\label{fig:overview-no-dummies} Basic filtering process by tally server without using dummies.
Ballots consist of an encrypted voter identifier (\tikz{\node[encvid,
smaller] {};}), an encrypted ballot index (\tikz{\node[encidx,smaller] {};}), and
an encrypted vote (\tikz{\node[encvote,smaller] {};}).}
\end{figure}

We illustrate our dummy mechanism in Figure~\ref{fig:simple-padding}, in
a scenario with two voters (\emph{A} and \emph{B}) where, the coercer 
forces voter \emph{A} to cast 2 ballots. 
At the end of the election phase the coercer observes 4
ballots and must determine whether \emph{A} revoted (situation 2) or not (situation 1).
Without dummies, distinguishing these situations is trivial:
if \emph{A} revoted there is a group of 3 ballots and one of 1 ballot, and there are two groups of 2 ballots otherwise.
We add dummy ballots and voters to make both situations look identical.
The idea is to find a \emph{cover} of ballots that could result from both situations.
For instance, adding to either situation two dummy 
voters that cast four dummy ballots total
yields groups of 1, 2, 2, and 3 ballots. This observation makes both situations 
indistinguishable for the coercer (Figure~\ref{fig:simple-padding}, right).

To ensure that the cover is \emph{independent} from the voters' real 
actions, its appearance must depend \emph{only} on the information
available to the coercer:
(1) the number of ballots $\lastballot$ posted by users to the bulletin board; and
(2) the number of voters $\nrcastvoters$ that cast a ballot.
The goal of the dummy generation strategy is to 
allocate dummy ballots such that
the adversary observes the \emph{same cover} regardless of the
actual distributions of the $\lastballot$ ballots over $\nrcastvoters$ voters.

Consider the case of two voters, i.e., $\nrcastvoters = 2$, and 9 ballots, i.e., $\lastballot = 9$. 
As the filtering stage only reveals the sizes of the groupings and not their relation to voters
the possible adversary's observations are $(1,8), (2,7), (3,6)$, and $(4,5)$. 
To cover all these scenarios one needs 8 voters (6 of which are dummy) 
casting $1,2,3,4,5,6,7,$ and $8$ ballots, for a total of $36 - 9 = 27$ dummy ballots.

We add dummy ballots to real voters as well to reduce the number of group sizes
that are possible.
For example, in the previous scenario one can pad the cases $(1,8), (2,7), (3, 6), (4,5)$ 
to $(1,8), (2, 8), (4,8), (4,8)$. This can be covered with a cover containing
voters with $1,2,4,8$ ballots each. Building this cover requires only 2 dummy voters 
and $15 - 9 = 6$ dummy ballots. We stress that 
\emph{the number of added dummy ballots is independent of 
how the real ballots are actually distributed among the two voters}.

We refer to Section~\ref{sec:dummies} for a generic and efficient algorithm for
computing a cover.

\tikzset{
  ballot-base/.style={
    draw, rectangle,
    inner sep=0,
    outer sep=4mm,
    minimum width=1.3mm,
    minimum height=1.3mm,
  },
  ballot-box/.style={ballot-base,fill},
  ballot-dum/.style={ballot-base},
  ballot-adv/.style={ballot-base,color=gray,fill},
}

\begin{figure}
  \centering
  \begin{tikzpicture}[
      ballotgroup/.style={
        matrix of nodes,
        column sep=2mm,
        row sep=0.6mm,
      },
      toplabel/.style={
        align=center,
        text height=5mm,
        text depth=3mm,
      }
    ]

    \node (sit1) [draw, rectangle, align=left] {
      \emph{Situation 1}\\
      A: 2 ballots\\
      B: 2 ballots
    };

    \node (sit2) [draw, rectangle, align=left, below=5mm of sit1] {
      \emph{Situation 2}\\
      A: 3 ballots\\
      B: 1 ballots
    };

    \matrix (group1) [ballotgroup,right=7mm of sit1] {
      |[ballot-box]| & |[ballot-box]| \\
      |[ballot-box]| & |[ballot-box]| \\
    };

    \matrix (dummies1) [ballotgroup,right=18mm of group1.north east,anchor=north] {
      |[ballot-dum]| & |[ballot-box]| & |[ballot-box]| & |[ballot-dum]| \\
                     & |[ballot-box]| & |[ballot-box]| & |[ballot-dum]| \\
                     &                &                & |[ballot-dum]| \\
    };

    \matrix (group2) [ballotgroup,right=7mm of sit2] {
      |[ballot-box]| & |[ballot-box]| \\
                     & |[ballot-box]| \\
                     & |[ballot-box]| \\
    };

    \matrix (dummies2) [ballotgroup,right=18mm of group2.north east,anchor=north] {
      |[ballot-box]| & |[ballot-dum]| & |[ballot-dum]| & |[ballot-box]| \\
                     & |[ballot-dum]| & |[ballot-dum]| & |[ballot-box]| \\
                     &                &                & |[ballot-box]| \\
    };

    \matrix (adv1) [ballotgroup,right=18mm of dummies1.north east,anchor=north] {
      |[ballot-adv]| & |[ballot-adv]| & |[ballot-adv]| & |[ballot-adv]| \\
                     & |[ballot-adv]| & |[ballot-adv]| & |[ballot-adv]| \\
                     &                &                & |[ballot-adv]| \\
    };

    \matrix (adv2) [ballotgroup,right=18mm of dummies2.north east,anchor=north] {
      |[ballot-adv]| & |[ballot-adv]| & |[ballot-adv]| & |[ballot-adv]| \\
                     & |[ballot-adv]| & |[ballot-adv]| & |[ballot-adv]| \\
                     &                &                & |[ballot-adv]| \\
    };

    \node [toplabel,above=1mm of group1] {Original\\ballots};
    \node [toplabel,above=1mm of dummies1] {Dummy\\addition};
    \node [toplabel,above=1mm of adv1] {Cover\\(coercer observation)};
  \end{tikzpicture}
  \vspace{-5mm}
  \caption[Example of padding]{\label{fig:simple-padding} The original ballots' groups (\tikz{\node [ballot-box] {};}) create distinguishable situations. Adding 2 dummy voters casting a total of 4 dummy ballots (\tikz{\node [ballot-dum] {};}), the situations become indistinguishable.
  } 
\end{figure}

\para{Filtering with dummies.}
Before shuffling the ballots, the TS adds dummy ballots to achieve the desired
grouping. We must ensure, however, that the TS cannot modify the election
outcome. To this end, the TS tags real and dummy ballots with a different
encrypted tag.
\tikzset{
  realtag/.style={ctxt,fill=black!30},
  dummytag/.style={ctxt,fill=black!30},
}

\begin{figure*}[tbp]
  \centering
  \begin{tikzpicture}[
      ballotgroup/.style={
        matrix of nodes,
        column sep=0.6mm,
        row sep=2mm,
      },
      belowlabel/.style={
        align=center,
        text height=5mm,
        text depth=3mm,
      }
    ]

    \matrix (group1) [ballotgroup] {
      |[encvid]| 135 & |[encidx]| 25 & |[encvote]| $c_1$ \\
      |[encvid]| 144 & |[encidx]| 89 & |[encvote]| $c_2$ \\
      |[encvid]| 144 & |[encidx]| 90 & |[encvote]| $c_2$ \\
      |[encvid]| 135 & |[encidx]| 26 & |[encvote]| $c_2$ \\
      |[encvid]| 531 & |[encidx]| 45 & |[encvote]| $c_1$ \\
    };

    \matrix (group2) [ballotgroup,right=5mm of group1] {
      |[encvid]| 135 & |[encidx]| 26 & |[encvote]| $c_2$ & |[realtag]| R \\
      |[encvid]| 144 & |[encidx]| 90 & |[encvote]| $c_2$ & |[realtag]| R \\
      |[encvid]| 135 & |[encidx]| 25 & |[encvote]| $c_1$ & |[realtag]| R \\
      |[encvid]| 144 & |[encidx]| 89 & |[encvote]| $c_2$ & |[realtag]| R \\
      |[encvid]| 531 & |[encidx]| 45 & |[encvote]| $c_1$ & |[dummytag]| R \\
      |[encvid]|  74 & |[encidx]| 17 & |[encvote]| $c_0$ & |[dummytag]| D \\
      |[encvid]| 103 & |[encidx]| 34 & |[encvote]| $c_0$ & |[dummytag]| D \\
      |[encvid]| 531 & |[encidx]| 43 & |[encvote]| $c_0$ & |[dummytag]| D \\
      |[encvid]| 531 & |[encidx]| 44 & |[encvote]| $c_0$ & |[dummytag]| D \\
    };
    \node (l2) [belowlabel,below=-2mm of group2] {2. Tagged ballots + dummies};
    \node [belowlabel,left=3mm of l2] {1. Original ballots};

    \matrix (group3) [ballotgroup,right=5mm of group2] {
      |[encvid]|  74 & |[encidx]| 17 & |[encvote]| $c_0$ & |[dummytag]| D \\
      |[encvid]| 531 & |[encidx]| 45 & |[encvote]| $c_1$ & |[dummytag]| R \\
      |[encvid]| 531 & |[encidx]| 43 & |[encvote]| $c_0$ & |[dummytag]| D \\
      |[encvid]| 144 & |[encidx]| 89 & |[encvote]| $c_2$ & |[realtag]| R \\
      |[encvid]| 103 & |[encidx]| 34 & |[encvote]| $c_0$ & |[dummytag]| D \\ 
      |[encvid]| 144 & |[encidx]| 90 & |[encvote]| $c_2$ & |[realtag]| R \\
      |[encvid]| 135 & |[encidx]| 26 & |[encvote]| $c_2$ & |[realtag]| R \\
      |[encvid]| 531 & |[encidx]| 44 & |[encvote]| $c_0$ & |[dummytag]| D \\
      |[encvid]| 135 & |[encidx]| 25 & |[encvote]| $c_1$ & |[realtag]| R \\
    };
    \node [belowlabel,right=3mm of l2] {3. Shuffled ballots};

    \matrix(group4) [ballotgroup,right=15mm of group3,yshift=20mm] {
      |[ptxt]| 17: & |[encvote]| $c_0$ & |[dummytag]| D \\
    };
    \node[belowlabel,above=-4mm of group4] {Voter 74};
    
    \matrix(group8) [ballotgroup, below=of group4] {
      |[ptxt]| 34: &|[encvote]| $c_0$ & |[realtag]| D \\
    };
    \node[belowlabel, above=-4mm of group8] {Voter 103};

    \matrix(group5) [ballotgroup,right=of group4,yshift=-5mm] {
      |[ptxt]| 25: & |[encvote]| $c_1$ & |[realtag]| R \\
      |[ptxt]| 26: & |[encvote]| $c_2$ & |[realtag]| R \\
    };
    \node[belowlabel,above=-4mm of group5] {Voter 135};

    \matrix(group6) [ballotgroup, below=of group8] {
      |[ptxt]| 89: & |[encvote]| $c_1$ & |[realtag]| R \\
      |[ptxt]| 90: & |[encvote]| $c_2$ & |[realtag]| R \\
    };
    \node[belowlabel,above=-4mm of group6] {Voter 144};

    \matrix(group7) [ballotgroup, below=of group5] {
      |[ptxt]| 43: & |[encvote]| $c_0$ & |[dummytag]| D \\
      |[ptxt]| 44: & |[encvote]| $c_0$ & |[dummytag]| D \\
      |[ptxt]| 45: & |[encvote]| $c_1$ & |[dummytag]| R \\
    };
    \node[belowlabel,above=-4mm of group7] {Voter 531};

    \node [belowlabel,right=50mm of l2] {4. Decrypt voter identifier and ballot indices,\\ group per voter};
  \end{tikzpicture}
  \vspace{-5mm}
  \caption[Overview of with dummies]{\label{fig:overview-dummies} Filtering process by tally server including dummies. Labels as in
Figure~\ref{fig:overview-no-dummies}. To enable correctness proofs, the TS tags
real ballots and dummy ballots with an encrypted marker (\tikz{\node[realtag,smaller] {};}).}
\end{figure*}

To determine how to add dummies, the TS inspects the decrypted voter identifiers
and ballot indices; determines a cover; and then computes how many dummies to
add to existing voters, and how many dummies to add to dummy voters. Consider
the example in Figure~\ref{fig:overview-dummies}. Given 3 voters and 5 ballots,
a cover with groups of size 1,1,2,2, and 3 suffices. The TS therefore adds 4 dummy
ballots in step 2: 2 dummies to existing voter 531, and two dummy voters, 74 and 
103, each with one dummy vote.

After adding the dummy ballots, the TS shuffles all ballots. Next, the TS decrypts the voter identifiers and ballot indices; 
groups ballots per voter, and selects the last ballot per voter. The tags enable
the TS to prove that it did not omit real ballots cast by real voters, and it did not
count dummy votes cast by dummy voters. In particular,
the TS proves in zero-knowledge that the selected votes are either tagged as a
real vote and therefore must correspond to the last ballot of a real voter; or
the selected vote corresponds to a
dummy voter (i.e., all the ballots in the group are tagged as dummies). Finally,
the TS privately discards the selected votes corresponding to dummy voters.
We refer to Section~\ref{sec:formal-desc} for the full details.

\para{Design choices.}
Obtaining coercion resistance require strong assumptions on some of the parties. 
In this section we discuss our design choices and motivate
our trust assumptions (see Table~\ref{tab:coercion-differences} for a comparison with other protocols). 
First, we believe that revoting is an easy to understand solution to achieve coercion resistance. 
It requires no extra devices, no memorization, no interaction with several entities during registration, and
no lying. For instance, Estonian elections have used a revoting
model for years\cite{verifEstonia} with 44\% of the electorate having used
internet voting\cite{ivotingEstonia}. Second, it does not require voters to
securely store cryptographic material, allowing a vote cast from any device.
This further reduces the possibility of coercion attacks by confiscating the
credential storage device.

Coercion resistance \emph{requires} absence of the coercer at some point
during the process. Fake-credential solutions assume that the coercer is absent during registration
and at some point during the voting phase. Revoting, instead, assumes
that a voter will have time after the coercion to cast the last vote. 
In the case of a remote registration process, a targeted attack will
most likely succeed in both scenarios. However, attacks
scale much better in the fake-credential setting: coercers have the entire
registration period (e.g., 24 days in Spain) to coerce a voter. 
In contrast, coercers in the revoting setting 
must monitor \emph{all} coerced voters after coercion to prevent them from
revoting before the election closes.

We decide to trade-off trust with respect to coercion resistance on the \pa and \ts
to obtain high gains in usability and efficiency:
trust on the \pa relieves users
from keeping cryptographic state;
and trust on the \ts enables \name's quasilinear filtering of ballots.

\section{The \name voting scheme}\label{sec:voting-scheme}
\para{Preliminaries.}
Let $\secpar$ be a security parameter.
Let $\group$ be a cyclic group of prime order $\grouporder$ generated by generator $\generator$. We write $\Zp$ for the integers modulo $\grouporder$.
We write $a \randin A$ to denote that $a$ is chosen uniformly at random from the set $A$.

\name uses the ElGamal's encryption scheme given by:
A key generation algorithm $\eckeygen(\group, \generator, \grouporder)$ which outputs a public-private key-pair $(\pk = \generator^{\sk}, \sk)$ for $\sk \in_R\mathbb{Z}_p$;
an encryption function $\ecenc(\pk,m)$ which takes as input a public key $\pk$ and a message $m \in \group$ and returns a ciphertext $\ctxt = (\ctxt_1, \ctxt_2) = (\generator^{r}, m \cdot \pk^{r})$ for $r \randin \Zp$;
and an decryption algorithm $\ecdec(\sk, \ctxt)$  which returns the message $m = \ctxt_2 \cdot \ctxt_1^{-\sk}$. 
\name uses deterministic encryption (with randomness zero) as a cheap verifiable
`encoding' for the ballot tags. Because the encryption is
deterministic, verifiers can cheaply check that the encrypted tags 
have been correctly formed.

We use a traditional signature scheme given by: 
A key generation algorithm $\signkeygen(1^{\secpar})$ that generates a public-private key-pair $(\pksign,\allowbreak \sksign)$;
a signing algorithm $\signature = \signsign(\sksign, m)$ that signs messages $m \in \{0, 1\}^*$; 
and a verification algorithm $\signverify(\pksign, \signature, m)$ that outputs $\top$ if $\signature$ is a valid signature on $m$ and $\bot$ otherwise.

We use verifiable shuffles \cite{Bayer2012} to support coercion resistance in a
private way. These enable an entity to verifiably shuffle of a list of
homomorphic ciphertexts in such a way that it is infeasible for a
computationally bounded adversary to match input and output ciphertexts.

\name uses \textit{mixnets}, a standard approach~\cite{10.5555/2028012.2028018, 10.5555/647253.720294, 5272310} to compute the election result given the filtered 
ballots output by the TS.
The trustees jointly run the $\votekeygen(1^{\secpar}, \threshold, \lasttrustee, \nrcandidates)$ protocol 
where $\secpar$ is the security parameter $\secpar$, $\nrcandidates$ the number of candidates,  $\lasttrustee$ the number of trustees, and $\threshold$ is the number of trustees needed to decrypt ciphertexts. 
This protocol outputs a public encryption key $\votepk$ and each trustee $i$ obtains a private decryption key $\voteski$. 
To encrypt her vote for candidate $\candidate$, a voter calls $(\encryptedvote, \voteproof) = \voteenc(\votepk, \candidate)$ to obtain an encrypted vote $\encryptedvote$ and proof $\voteproof$ that $\encryptedvote$ encrypts a choice for a valid candidate. 
We denote the encryption of the zero candidate (i.e. no candidate) with 
explicit randomizer
 $r\randin \Zp$ by $\votezeroenc(\votepk; r)$. 
The algorithm $\voteverify(\votepk, \encryptedvote, \voteproof)$ outputs $\top$
if the encrypted vote $\encryptedvote$ is correct, and $\bot$ otherwise.
Given a list of selected votes $\{\selectedvote{1}, \ldots,
\selectedvote{\kappa}\}$, the trustees jointly run the $(\result,
\tallyproofdec) \gets \mixdecprotocol(\votepk,\allowbreak
\{\selectedvote{1},\allowbreak \allowbreak \ldots, \selectedvote{\kappa}\})$ protocol to compute the
election result $\result$ and a proof of correctness $\tallyproofdec$.
Internally, $\mixdecprotocol$ uses a standard verifiable mix network and
verifiable decryption to shuffle
and decrypt the ballots, and then computes the final result in the clear.
Any verifier can run $\voteverifytally(\votepk, \{\selectedvote{1}, \ldots,
\selectedvote{\kappa} \}, \result, \tallyproofdec)$ to verify whether the result
$\result$ is computed correctly.

\begin{table}[tbp]
{\small
  \caption{\label{tab:notation} Summary of notation.}
  \centering
  \begin{tabular}{@{}ll@{}}
    \toprule
    Symbol & Description \\
    \midrule
    $(\group, \generator, \grouporder)$ & Group, generator and prime order \\
    $\Zp$ & Integers modulo the group order $\grouporder$\\
    \midrule
    $\nrvoters$ & Number of eligible voters \\
    $\lasttrustee, \threshold$ & Number of trustees and decryption threshold \\
    $\papk, \tspk, \votepk$ & Public keys of PA, TS, and trustees \\
    $\pask, \tssk, \voteski$ & Private keys of PA, TS, and trustee $i$ \\
    \midrule
    $\vid_i, \ballotnumber_i$ & Voter identifier and ballot index of voter $i$\\
    $\encryptedvid, \encryptedindex$ & The encrypted $\vid$ and ballot index \\
    $\pk, \sk$ & Ephemeral signing keys \\
    $\votetoken, \tokensignature$ & Ephemeral voting token and signature by PA \\
    $\ballot, \signature$ & Ballot and signature using ephemeral key $\pk$\\
    $\encryptedvote, \voteproof$ & Encrypted vote and zero knowledge proof of \\
    & correct encryption \\
   \midrule 
    $\lastballot, \lastdummy$ & Number of real and dummy ballots on the board \\
    $\tstag, \tstag_R, \tstag_D$ & Ballot tags for unknown, real, and dummy ballots \\
    $\decryptedvid{i}, \decryptedindex{i}$ & Decrypted voter identifier and ballot index \\
    $\decryptproof_i, \reencryptionproof_i$ & Zero knowledge proof of correct decryption\\
    & and vote selection \\
    $\preselectedvote{i}$ & Selected vote for group $i$ \\
    $\result$ & Election result \\
    $\filteradditions, \tallyproofdec$ & Full filter and tally proofs \\
    \bottomrule
  \end{tabular}
}
\end{table}

The TS uses standard zero-knowledge proofs of knowledge~\cite{Goldwasser:1985:KCI:22145.22178}
to prove that it operated correctly. We use the Fiat-Shamir heuristic~\cite{FiatS86} to convert them into non-interactive proofs of knowledge. We adopt the Camenisch-Stadler notation~\cite{Camenisch:1997:EGS:646762.706305}
to denote such proofs and write, for example,
\begin{equation*}
  \SPK\{ (\sk) : \pk = \generator^{\sk} \;\land\; \tokencounter = \ecdec(\sk, \encryptedindex) \}
\end{equation*}
to denote the non-interactive signature proof of knowledge that the prover knows
the private key $\sk$ corresponding to $\pk$ and that $\encryptedindex$ decrypts
to $\tokencounter$ under $\sk$.

\subsection{\name description}\label{sec:formal-desc}

\name proceeds in three phases: the pre-election phase, the election phase,
 and the tally phase. See Table~\ref{tab:notation} for a summary of frequently used symbols.

\subsubsection{Pre-election phase}
In the pre-election phase, the PBB publishes the candidates, and the TS and the trustees prepare 
their cryptographic material. The PA assigns a unique, random voter identifier $\vid_i$ to each 
eligible voter. The correspondence between voters and their identifiers is private to the PA. The PA 
also generates a random token index $\ballotnumber_i$ for each of the voters to enable the 
selection of the last ballot per voter. More formally: 

\begin{proc}[$\setup$]
  \label{proc:setup}
  To setup an election system with security parameter $\secpar$, electoral roll $\electoralroll$, candidate list $\candidatelist$, threshold $\threshold$, and $\lasttrustee$ trustees, the different entities run the 
  $\setup(1^{\secpar}, \electoralroll, \candidatelist, \threshold,
  \lasttrustee)$ procedure. First, they pick a group $\group$ with generator $\generator$ and prime order $\grouporder$. They then proceed with the following steps:
\begin{enumerate}
\item The PBB initializes the bulletin board, and adds the list of candidates $\candidatelist$ to the bulletin board.
\item The PA stores the electoral roll $\electoralroll$. Let $\nrvoters$ be the
  number of eligible voters on the electoral roll. The PA generates a random and
  unique voter
  identifier $\vid_i \in \group$ and ballot index $\ballotnumber_i \in
  \{2^{\secpar - 2}, \ldots, 2^{\secpar - 1} - 1\}$
  for each voter $\voter_i$ on the electoral
  roll and stores them internally.
  Finally, the PA generates a public-private key-pair $(\papk, \pask) = \signkeygen(1^\secpar)$ to sign tokens.
  It publishes $\papk$. 
\item The TS generates a public-private ElGamal key-pair $(\tspk, \tssk) = \eckeygen(\group, \generator, \grouporder)$. It publishes $\tspk$.
\item The trustees run $\votekeygen(1^{\secpar},\threshold, \lasttrustee, |\candidatelist|)$ to generate a public encryption key $\votepk$ and decryption keys $\voteski$ that the trustees keep private.
\end{enumerate}
\end{proc}

\begin{figure}[tbp]
  \centering
  \begin{tikzpicture}[
    party/.style={
      draw, rectangle,
      minimum height=3em,
      text centered,
      text width=4em,
      anchor=mid
    },
    ]
    \node[party,minimum height=8em,text width=3em] (v) {Voter};
    \node[party, above right =1em and 14em of v.east] (pa) {Polling\\ Authority};
    \node[party, below right =1em and 14em of v.east] (pbb) {Public\\Bulletin\\Board};

    \draw[->]
      ([yshift=1em] pa.west -| v.east) --
      node[above] {1. Authenticate}
      ([yshift=1em] pa.west);

    \draw[<-]
      ([yshift=-1em] pa.west -| v.east) --
      node[align=center, text width=13em] {2. Return token $\votetoken$ containing\\encrypted identifier $\encryptedvid$ and index $\encryptedindex$}
      ([yshift=-1em] pa.west);

    \draw[->]
      ([yshift=1em] pbb.west -| v.east) --
      node[above,text width=13em,align=center] {3. Cast ballot $\ballot$ containing $\encryptedvote$, $\encryptedvid$ and $\encryptedindex$}
      ([yshift=1em] pbb.west);

    \draw[<-]
      ([yshift=-1em] pbb.west -| v.east) --
      node[above] {4. Verify that ballot $\ballot$ has been added}
      ([yshift=-1em] pbb.west);
  \end{tikzpicture}
  \vspace{-4mm}
  \caption{\label{fig:overview-cast-vote} Election phase: Overview.}
\end{figure}

\subsubsection{Election phase}
In the election phase (see Figure~\ref{fig:overview-cast-vote}),
voters first authenticate to the PA to obtain an ephemeral voting token $\votetoken$.
They use this token to sign their ballot $\ballot$, 
and post the ballot on the bulletin board. The bulletin board verifies
that the ballot is valid. We formalize this phase in three procedures:

\begin{proc}[$\gettoken(id,\text{Auth})$]
  \label{proc:gettoken}
  On input her identity $id$ and her inalienable means of
  authentication $\code{Auth}$:
  \begin{enumerate}
  \item The voter authenticates to the PA using $\code{Auth}$.
  \item The PA looks up the corresponding voter identifier $\vid_i$ and ballot
    index $\ballotnumber_i$. Then, the PA encrypts the voter
    identifier $\encryptedvid = \ecenc(\tspk, \vid_i)$ and ballot number
    $\encryptedindex = \ecenc(\tspk, \ballotnumber_i)$ (it first encodes
    $\ballotnumber_i$ as an element of $\group$), and increments the 
    ballot index $\ballotnumber_i:= \ballotnumber_i + 1$. The PA hides the index
    $\ballotnumber_i$ from the user to prevent coercers -- who can see what
    users can see under coercion -- from being able to detect whether the user revoted.
  \item The PA creates an ephemeral signing key $(\pk, \sk) = \signkeygen()$ and signs this key together with the encrypted voter identifier and ballot number:
    \begin{equation*}
     \tokensignature = \signsign(\pask, \pk \parallel \encryptedvid \parallel \encryptedindex)
    \end{equation*}
    and returns the token $\votetoken = (\pk, \sk, \encryptedvid, \encryptedindex, \tokensignature)$ to the user.
  \item \label{proc:gettoken:validate}
    The user verifies the token $\votetoken = (\pk, \sk, \encryptedvid, \encryptedindex, \tokensignature)$ by checking that $\signverify(\papk, \tokensignature, \pk \parallel \encryptedvid \parallel \encryptedindex) = \top$.
  \end{enumerate}
\end{proc}

\begin{proc}[$\vote(\votetoken, \candidate)$]
  \label{proc:vote}
	To cast a vote, the voter takes as private input the ephemeral voting token $\votetoken = (\pk, \sk, \encryptedvid, \encryptedindex, \tokensignature)$ and a candidate $c \in\candidatelist$, and then proceeds as follows:
  \begin{enumerate}
  \item \label{proc:vote:encrypt} Encrypts her candidate $c$ as
    $(\encryptedvote, \voteproof) = \voteenc(\votepk,\allowbreak c)$
    to obtain ciphertext $\encryptedvote$ and zero-knowledge proof of correct encryption $\voteproof$. 
  \item Creates the ballot
    \begin{equation*}
      \ballot = (\encryptedvote, \voteproof, \pk, \encryptedvid, \encryptedindex, \tokensignature, \signature)
    \end{equation*}
    where $\signature = \signsign(\sk, \encryptedvote \parallel \voteproof
    \parallel \pk \parallel \encryptedvid \parallel \encryptedindex \parallel
    \tokensignature)$. The voter posts the ballot $\ballot$ to the public bulletin board.
  \item The public bulletin board runs $\valid(\ballot)$, see below, to check
    that the ballot is valid, before appending it.
  \item Finally, the voter verifies that the ballot $\ballot$ has been appended to the bulletin board.
  \end{enumerate}
\end{proc}

\begin{proc}[$\valid(\ballot)$]
  \label{proc:valid}
  The bulletin board verifies that the ballot $\ballot = (\encryptedvote, \voteproof, \pk, \encryptedvid, \encryptedindex, \tokensignature, \signature)$ is valid with respect to the current state of the bulletin board as follows:
  \begin{enumerate}
  \item \label{proc:valid:checks}
    The PBB checks the correctness of the encrypted vote; of the user's signature using the ephemeral key $\pk$; and the PA's signature on this ephemeral key $\pk$, the encrypted voter identifier $\encryptedvid$, and the encrypted ballot number $\encryptedindex$:
    \begin{align*}
      & \voteverify(\votepk, \encryptedvote, \voteproof) = \top \\
      & \signverify(\pk, \signature, \encryptedvote \parallel \voteproof \parallel \pk \parallel \encryptedvid \parallel \encryptedindex \parallel \tokensignature) = \top \\
      & \signverify(\papk, \tokensignature, \pk \parallel \encryptedvid \parallel \encryptedindex) = \top.
    \end{align*}
  \item The PBB checks that neither the encrypted vote $\encryptedvote$ nor the key $\pk$ appear in any ballot $\ballot'$ on the bulletin board.
  \end{enumerate}
  If any of these checks fails, the bulletin board returns $\bot$, otherwise, the PBB returns $\top$.
\end{proc}

\begin{figure}[tbp]
  \centering
  \begin{tikzpicture}[
  party/.style={
      draw, rectangle,
      minimum height=3em,
      text centered,
      text width=5em,
      anchor=mid
    },
  ]
  \node[party, minimum height=8em, text width=3em] (pbb) {Public Bulletin Board};
  \node[party, above right =1em and 10em of pbb.east] (ts) {Tally Server};
  \node[party, below right =1em and 10em of pbb.east] (tt) {Trustees};

  \draw[->]
    ([yshift=1em] ts.west -| pbb.east)
    -- node[above] {1. Get ballots $\ballot_i$}
    ([yshift=1em] ts.west);
    
  \draw[<-]
    ([yshift=-1em] ts.west -| pbb.east) --
    node[above] {2. Post selected votes $\selectedvote{i}$}
    node[below] {and proof of correct filter $\filteradditions$}
    ([yshift=-1em] ts.west);

  \draw[->]
    ([yshift=1em] tt.west -| pbb.east)
    -- node[above] {3. Get $\ballot_i, \selectedvote{i}, \filteradditions$}
    ([yshift=1em] tt.west);

  \draw[<-]
    ([yshift=-1em] tt.west -| pbb.east) --
    node[above] {4. Post result $\result$ and}
    node[below] {proof of correct tally $\tallyproofdec$}
    ([yshift=-1em] tt.west);
  \end{tikzpicture}
  \vspace{-4mm}
  \caption{\label{fig:overview-tally} Tally phase: Overview}
\end{figure}

\begin{figure*}[tbp]
  \centering
  \begin{tikzpicture}[steparrow/.style={thick}]
    \node[align=center,draw] (pbb) {
      $\begin{array}{@{}c@{}}
        \ballot_1 \\
        \vdots \\
        \ballot_{\lastballot}
      \end{array}$
    };
    \node[below=0mm of pbb] {Ballots};

    \node[align=center,draw,right=1cm of pbb] (pbb-dummies) {
      $\begin{array}{@{}c@{}}
        \ballot_1 \\
        \vdots \\
        \ballot_{\lastballot} \\
        \ballot_{\lastballot + 1} \\
        \vdots \\
        \ballot_{n_T}
      \end{array}$
    };
    \node[below=0mm of pbb-dummies,align=center] {Ballots\\with dummies};

    \node[align=center,draw,right=of pbb-dummies] (shuffled) {
      \setlength\tabcolsep{3pt}
      \begin{tabular}{@{}cccc@{}}
      $\encryptedvote_1'$ & $\encryptedvid_1'$ & $\encryptedindex_1'$ & $\tstag_1'$ \\
      $\vdots$ & $\vdots$ & $\vdots$ & $\vdots$ \\
      $\encryptedvote_{n_T}'$ & $\encryptedvid_{n_T}'$ & $\encryptedindex_{n_T}'$ & $\tstag_{n_T}'$ \\
      \end{tabular}
    };
    \node[below=0mm of shuffled,align=center] {Shuffled ballots\\without proofs};

    \node[right=2cm of shuffled,align=center] (group-1) {
      $\vid_1$ \\[0mm]
      \begin{tabular}{|@{}ccc@{}|}
        \hline
        $\encryptedvote_{1,1}$ & $\tokencounter_1$ & $\tstag_{1,1}$ \\
        $\vdots$ & $\vdots$ & $\vdots$ \\
        $\encryptedvote_{1,\nrballotsingroup_1}$ & $\tokencounter_1 +
         \nrballotsingroup_1$ & $\tstag_{1,\nrballotsingroup_1}$ \\
        \hline
      \end{tabular}
    };
    \node[right=of group-1,align=center] (group-2) {
      $\vid_\kappa$ \\[0mm]
      \begin{tabular}{|ccc|}
        \hline
        $\encryptedvote_{\kappa,1}$ & $\tokencounter_\kappa$ & $\tstag_{\kappa,1}$ \\
        $\vdots$ & $\vdots$ & $\vdots$ \\
        $\encryptedvote_{\kappa,\nrballotsingroup_\kappa}$ & $\tokencounter_\kappa + 
        \nrballotsingroup_\kappa$ & $\tstag_{\kappa, \nrballotsingroup_\kappa}$ \\
        \hline
      \end{tabular}
    };
    \draw[-,dotted] (group-1) -- (group-2);
    \node[draw,fit=(group-1) (group-2)] (group) {};
    \node[above=0mm of group,align=center] {Grouped ballots};

    \node[below=1cm of group-1] (svote1) {$\preselectedvote{1}$};
    \node[below=1cm of group-2] (svote2) {$\preselectedvote{\kappa}$};
    \draw[->,gray] (group-1) -- (svote1);
    \draw[->,gray] (group-2) -- (svote2);
    \node[draw,fit=(svote1) (svote2)] (svotegroup) {};
    \draw[-,dotted] (svote1) -- (svote2);
    \node[below=0mm of svotegroup,align=center] {Selected votes including dummy voters};

    \node[left=5cm of svote1] (dvote1) {$\selectedvote{1}$};
    \node[right=1cm of dvote1] (dvote2) {$\selectedvote{\lastvoter}$};
    \draw[-,dotted] (dvote1) -- (dvote2);
    \node[draw,fit=(dvote1) (dvote2)] (dvotegroup) {};
    \node[below=0mm of dvotegroup,align=center] {Selected votes};

    \draw[->,steparrow] (pbb) -- node[above,align=center] {add\\dummies} (pbb-dummies);
    \draw[->,steparrow] (pbb-dummies) -- node[above,align=center] {shuffle} (shuffled);
    \draw[->,steparrow] (shuffled) -- node[above,align=center] {decrypt $\encryptedvid_i', \encryptedindex_i'$\\and group} (group);

    \draw[->,steparrow] (group) -- node[align=left,fill=white,fill opacity=0.8,text opacity=1] {Compute selected votes} (svotegroup.north -| group.south);
    \draw[->,steparrow] (svotegroup) -- node[above,align=center] {Shuffle and\\reveal+remove dummies} (dvotegroup);
  \end{tikzpicture}
  \vspace{-4mm}
 \caption{\label{fig:overview-filter-ballots} High-level overview of ballot
    filtering and grouping. Let $\lastballot$ be the number of ballots,
    $\lastdummy$ be the number of dummies, $n_T = \lastballot + \lastdummy$
    their sum, $\kappa$ be the number of voters plus number of dummy voters, and
    $\nrballotsingroup_i$ the number of (dummy) ballots for (dummy) voter $i$. First, the TS adds dummy
    ballots and proves they are well-formed.
    Then shuffles all ballots without the proofs, hiding
    which ballots were dummies. Then it verifiably decrypts both the encrypted
    voter identifiers $\encryptedvid_i'$ and the encrypted indices
    $\encryptedindex_i'$ to group the ballots by 
    $\vid$ and to select the last votes $\preselectedvote{i}$.
    Finally, it outputs the selected votes $\selectedvote{i}$ without dummies.
 }
\end{figure*}

\subsubsection{Tally phase}

In the tally phase (see Figure~\ref{fig:overview-tally}), the TS takes the ballots 
from the \pbb, adds dummy ballots, and shuffles them. Then, it selects the last
vote per voter (see
Figure~\ref{fig:overview-filter-ballots}). Then, to prevent dummy voters from making
an overhead in the shuffle and decrypt phase, it shuffles the selected ballots and 
removes all ballots cast by dummy voters.
Finally, the trustees shuffle and decrypt the selected ballots from real voters.
Formally, we define two procedures, one to filter votes ($\filter$), and one to tally the selected ballots ($\tally$):

\begin{proc}[$\filter$]
  \label{proc:filter}
  After the election closes, the TS selects the selected votes $\selectedvote{i}$ and produces the filter proof $\filteradditions$. If it aborts, it publishes the current $\filteradditions$ to the public bulletin board.

\begin{enumerate}
\item \label{proc:filter:checks}
  The tally server (TS) retrieves an ordered list of ballots $[\ballot_1, \ldots, \ballot_{\lastballot}]$ from the PBB, where
  $\ballot_i = (\encryptedvote_i, \voteproof_i, \pk_i, \encryptedvid_i, \encryptedindex_i, \tokensignature_i, \signature_i )$. The TS verifies the ballots by running step~\ref{proc:valid:checks} of \valid and verifies that there are 
  are no duplicate votes $\encryptedvote_i$ or ephemeral public keys $\pk_i$ on the bulletin board. If any of these checks fails, the TS sets $\filteradditions = \bot$, posts it to the bulletin board, and aborts.
\item \label{proc:add-dummies} The TS removes the proofs and signatures to
  obtain stripped ballots. It provably tags the ballots as `real' ballots using
  a deterministic ElGamal encryption (with randomness zero) of the value $g^0 = 1_{\group}$,  
  $\tstag_R = \ecenc(\tspk, \generator^0) = (g^0, g^0 \pk^{0}) = (1_{\group}, 1_{\group})$:
  \begin{equation*}
    \strippedballot_i = (\encryptedvote_i, \encryptedvid_i, \encryptedindex_i, \tstag_R).
  \end{equation*}
  Next, the TS creates $\lastdummy$ dummy ballots and provably tags them as such
  using a deterministic ElGamal encryption of the value $\generator$, $\tstag_D
  = \ecenc(\tspk, \generator) = (1_{\group}, \generator \cdot \pk^{0})$:
  \begin{equation*}
    \strippedballot_i = (\encryptedvote_{\epsilon}, \encryptedvid_i, \encryptedindex_i, \tstag_D),
  \end{equation*}
  where $i > \lastballot$ and $\encryptedvote_{\epsilon} = \votezeroenc(\votepk;
  0)$. We explain below how the TS determines the number of dummies $\lastdummy$
  as well as the values for $\encryptedvid_i$ and $\encryptedindex_i$.
  The TS adds the stripped ballots $\strippedballotlist = [\strippedballot_1, \ldots, \strippedballot_{\lastballot + \lastdummy}]$ to $\filteradditions$.
\item \label{proc:shuffle} The TS shuffles the stripped ballots $\strippedballotlist = [\strippedballot_1,\allowbreak \ldots,\allowbreak \strippedballot_{\lastballot + \lastdummy}]$ 
and randomizes the ciphertexts, to obtain a list of shuffled and randomized stripped ballots $\shuffledballotlist = [\shuffledballot_1, \ldots, \shuffledballot_{\lastballot + \lastdummy}]$, which it adds, together with a proof $\shuffleproof$ that this shuffle was performed correctly, to $\filteradditions$.
\item \label{proc:filter:decrypt-vid-index}
  The TS now operates on each shuffled ballot $\shuffledballot_i = (\encryptedvote_i', \encryptedvid_i', \encryptedindex_i', \tstag_i')$. It decrypts $\encryptedvid_i'$ to recover the shuffled and decrypted identifier, $\decryptedvid{i}$.
  It also decrypts $\encryptedindex_i'$ to obtain the shuffled ballot index $\decryptedindex{i}$ and proves it did so correctly: 
  \begin{align*}
    \decryptproof_i = \SPK\{&(\tssk): \tspk = \generator^{\tssk} \land \\
    &\decryptedvid{i} = \ecdec(\tssk, \encryptedvid_i') \land \\
    &\decryptedindex{i} = \ecdec(\tssk, \encryptedindex_i') \}
  \end{align*}
  It then adds 
  $\decryptedvidindexlist = [(\decryptedvid{1}, \decryptedindex{1}, \decryptproof_1),\allowbreak \ldots,\allowbreak (\decryptedvid{\lastballot+\lastdummy},\allowbreak \decryptedindex{\lastballot+\lastdummy}, \decryptproof_{\lastballot + \lastdummy})]$
  to $\filteradditions$. The TS aborts and adds $\bot$ to $\filteradditions$ if the decrypted ballot indices $\decryptedindex{i}$ are not unique for a given voter identifier. More precisely, it aborts if there exists indices $i,j; i\not=j$ such that $(\decryptedvid{i}, \decryptedindex{i}) = (\decryptedvid{j}, \decryptedindex{j})$.
\item\label{proc:filter:reencryption} The TS groups the ballots with the same voter identifier, and selects the
  ballot with the highest ballot index from each group. Let $\ballotgroup_1,
  \ldots, \ballotgroup_{\nrballotgroups}$ be the sets of ballot indices grouped
  by voter identifier. Consider group $\ballotgroup_j$ of size $\nrballotsingroup_j$.
 Let $j* = \argmax_{k,k \in  \ballotgroup_j} \decryptedindex{k}$ be the index for which the ballot
 index $\decryptedindex{j*}$ is maximal. Group $\ballotgroup_j$ either corresponds to a real voter, or to a fake voter. 
  The TS produces a reencryption $\preselectedvote{j}$ of the encrypted votes as follows:
  \begin{enumerate}
  \item If the group $\ballotgroup_j$ corresponds to a real voter, then the TS simply reencrypts the vote corresponding to the last ballot, i.e., it picks $r_j$ at random and sets
    \begin{equation*}
      \preselectedvote{j} = \encryptedvote_{j*} \cdot \votezeroenc(\votepk; r_j),
    \end{equation*}
    to a randomized encryption of $\encryptedvote_{j*}$.
  \item If the group $\ballotgroup_j$ corresponds to a fake voter, then picks $r_j$ at random and sets $\preselectedvote{j}$ to an empty vote:
    \begin{equation*}
      \preselectedvote{j} = \votezeroenc(\votepk; r_j).
    \end{equation*}
  \end{enumerate}
  The TS proves that it computed the $\preselectedvote{j}$ correctly. If the corresponding voter is real, then the ballot $\shuffledballot_{j*}$ selected in (a) should be a real ballot, so $\ecdec(\tssk, \tstag'_{j*})$ should equal $\generator^{0}$. If the voter is fake, then for all tags $\tstag'_{i_k}$ with $i_k\in\ballotgroup_j$, we have that $\ecdec(\tssk, \tstag'_{i_k}) = \generator^{1}$.
  Let $\ballotgroup_j = \{i_1, \ldots, i_{\nrballotsingroup_j}\}$ and 
  $\tstag = \prod_{k=1}^{\nrballotsingroup_j} \tstag'_{i_k}$,
  then the TS constructs the proof
  \begin{multline*}
    \reencryptionproof_j = \SPK\{ (r_j, \tssk) : \tspk = \generator^{\tssk} \land\\
      ((\generator^{0} = \ecdec(\tssk, \tstag'_{j*}) \land \preselectedvote{j} = \encryptedvote_{j*} \cdot \votezeroenc(\votepk; r_j)) \lor \\
      (\generator^{\nrballotsingroup_j} = \ecdec(\tssk, \tstag) \land \preselectedvote{j} = \votezeroenc(\votepk; r_j)))
    \}.
  \end{multline*}
  The TS adds the list of filtered encrypted votes $\filteredvotes = [(\vid_1, \preselectedvote{1}, \reencryptionproof_1), \ldots, (\vid_{\nrballotgroups}, \preselectedvote{\nrballotgroups}, \reencryptionproof_{\nrballotgroups})]$ to $\filteradditions$.
\item\label{proc:filter:second-shuffle} The list $\selectedvotelistdummies = [\preselectedvote{1},\allowbreak
\ldots,\allowbreak \preselectedvote{\nrballotgroups}]$ of selected votes 
contains ballots by dummy voters. In the next two steps, the TS removes
these. First, the TS shuffles
and randomizes the ciphertexts to obtain a new list $\selectedvotelistdummies' =
[\preselectedvote{1}',\allowbreak \ldots, \allowbreak
\preselectedvote{\nrballotgroups}']$ , which it adds, together with a proof
$\shuffleproof'$ of correct shuffle, to $\filteradditions$.
  \item The TS knows the indices $\dummyindices$ of votes in
    $\selectedvotelistdummies'$ that correspond to dummy voters and randomizers
    $r_i$ such that $\preselectedvote{i}' = \votezeroenc(\votepk; r_i)$ for $i
    \in \dummyindices$.
  The TS adds $\dummyindices$ and 
  $\openedrandomizers = [r_i]_{i\in\dummyindices} $ to $\filteradditions$.
  \item Finally, the TS publishes the remaining votes $\selectedvotelist = [\selectedvote{1},\allowbreak \ldots,\allowbreak \selectedvote{\lastvoter}]$ and the full proof $\filteradditions$ to the public bulletin board.
\end{enumerate}
\end{proc}

The filter procedure ensures that the TS cannot replace ballots by real voters: a selected 
vote must either correspond to a ballot by a real voter (condition a) or the selected 
vote is empty and the voter is a dummy voter (condition b). Moreover, the TS can only 
remove votes cast by dummy voters.

\begin{proc}[$\tally$]
  \label{proc:tally}
  To compute the final tally, the trustees proceed as follows:
\begin{enumerate}
\item The trustees verify that the TS operated honestly by running the $\verifyfilter()$ algorithm (see below). If \verifyfilter returns $\bot$ they return $(\result, \tallyproofdec) = (\bot, \bot)$.
\item\label{proc:tally:hom} Let
  $\selectedvotelist = [\selectedvote{1},\allowbreak \ldots,\allowbreak \selectedvote{\lastvoter}]$.
  The trustees jointly run the $(\result,
  \tallyproofdec) \gets \mixdecprotocol(\votepk,\allowbreak
  \selectedvotelist)$.
  They publish the election result $\result$ and the zero knowledge proof of
  correctness $\tallyproofdec$ to the public bulletin board.
\end{enumerate}
\end{proc}

\subsubsection{Verification}
Any external auditor can use the \pbb to verify that all
steps in the tally and filtering phases were performed correctly. We define the following verification procedures: 

\begin{proc}[$\verifyfilter$]
  \label{proc:verify-filter}
  Any party can verify that the filtering processes was performed correctly by running $\verifyfilter()$. This algorithm examines the content of the bulletin board and performs the following checks:
\begin{enumerate}
\item First, check if all ballots are correct and that no duplicate votes or public keys are included in the ballots as per step~\ref{proc:filter:checks} of \filter. If the checks fail, the bulletin board should contain $\filteradditions = \bot$; \verifyfilter returns $\bot$ if that is not the case. Otherwise, it continues.
\item It next retrieves the selected votes $\selectedvotelist$ and the proof $\filteradditions$ from the bulletin board and continues as follows:
  \begin{enumerate}
\item Verify that stripped real ballots are correctly formed.
    Consider ballots $[\ballot_1, \ldots, \ballot_{\lastballot}]$, where
    $\ballot_i = (\encryptedvote_i,\allowbreak \voteproof_i,\allowbreak \pk_i, \encryptedvid_i,
    \encryptedindex_i, \tokensignature_i, \signature_i )$ and check that the stripped ballot $\strippedballot_i = (\encryptedvote_i, \encryptedvid_i, \encryptedindex_i, \tstag_R)$ has been added to $\filteradditions$ (where $\tstag_R$ is as above).
\item \label{proc:verify-dummies} Verify that the dummy ballots on the
  bulletin board are correctly formed. For ballots
  $\strippedballot_{\lastballot + 1}, \ldots, \strippedballot_{\lastballot +
    \lastdummy}$ where $\strippedballot_i = (\encryptedvote_i, \encryptedvid_i,
  \encryptedindex_i, \tstag_i)$, check that $\encryptedvote_i =
  \encryptedvote_\epsilon$ and $\tstag_i = \tstag_D$ (where
  $\encryptedvote_\epsilon$ and $\tstag_D$ are as above). 
\item Let
    $\strippedballotlist = [\strippedballot_1, \ldots, \strippedballot_{\lastballot + \lastdummy}]$
    be all stripped ballots, and
    $\shuffledballotlist = [\shuffledballot_1, \ldots, \shuffledballot_{\lastballot + \lastdummy}]$
    the shuffled and randomized ballots.
    Verify the shuffle proof $\shuffleproof$
    to check that $\shuffledballotlist$ is a correct shuffle of $\strippedballotlist$.
\item Next, let
  $\decryptedvidindexlist = [(\decryptedvid{1}, \decryptedindex{1}, \decryptproof_1),\allowbreak \ldots,\allowbreak (\decryptedvid{\lastballot+\lastdummy},\allowbreak \decryptedindex{\lastballot+\lastdummy}, \decryptproof_{\lastballot + \lastdummy})]$ from the bulletin board, and verify the decryption proofs $\decryptproof_i$ for each of the shuffled ballots $\shuffledballot_i$.
\item
  Let $\vid_i'$ and $\ballotnumber_i'$ be the plaintexts verified in the previous step. Group the ballots by voter identifier into ballot groups $\ballotgroup_j$. For each group $\ballotgroup_j$, find ballot $\ballot_{j*}$ with the highest ballot index, recompute $\tau = \prod_{k = 1}^{\nrballotsingroup_j} \tau_{i_k}$, and verify the reencryption proof $\reencryptionproof_j$. 
\item Let $\selectedvotelistdummies$ be the selected votes $[\preselectedvote{1},\allowbreak
  \ldots,\allowbreak \preselectedvote{\nrballotgroups}]$ and 
  $\selectedvotelistdummies' = [\preselectedvote{1}',\allowbreak \ldots,\allowbreak \preselectedvote{\nrballotgroups}']$
  the shuffled and randomized votes. Verify the shuffle proof $\shuffleproof'$
  for $\selectedvotelistdummies$ and $\selectedvotelistdummies'$. 
\item Finally, for each $i \in \dummyindices$ verify that
  $\preselectedvote{i}' = \votezeroenc(\votepk; r_i)$ and that
  $\selectedvotelist =
  [\selectedvotelistdummies[i] \;|\; i \notin \dummyindices]$.
\end{enumerate}
If any of the checks fail, it returns $\bot$, and $\top$ otherwise.
\end{enumerate}
\end{proc}

\begin{proc}[\verify]
  \label{proc:verify}
  Any party can verify the result $\result$ and proof $\tallyproofdec$ against the public bulletin board. To do so, they proceed as follows:
  \begin{enumerate}
  \item Verify that the TS operated honestly by running the $\verifyfilter()$ algorithm. If \verifyfilter returns $\bot$, then return $\top$ if $(\result, \tallyproofdec) = (\bot, \bot)$, otherwise return $\bot$.
  \item Given the selected votes $\selectedvotelist$, return the result of
    $\voteverifytally(\votepk, \selectedvotelist, \result, \tallyproofdec).$
  \end{enumerate}
\end{proc}

\subsection{Hiding revoting patterns with dummies}
\label{sec:dummies}
In this section we provide a formal description of the dummy generation algorithm
introduced in Section~\ref{sec:key-ideas}. 

\para{Finding a cover.}
Formally, a cover is a set $\cover = \{(\groupsize_i, \nrvotersgroup_i)\}_i$ formed by groupings
$(\groupsize_i, \nrvotersgroup_i)\in\mathbb{Z}^+\times\mathbb{Z}^+$. Here, 
$\groupsize_i$ is the size of the ballot groups within that grouping, and
$\nrvotersgroup_i$ is the upper bound on the number of times
that such a ballot group can occur in any distribution of the $\lastballot$
real ballots among real voters.
We aim to find a cover of minimal size $\coversize = \sum_i
\groupsize_i \cdot \nrvotersgroup_i$ to minimize the
number of dummies added.

\parait{A sufficient cover.}
We derive an upper bound on the amount of dummies
required to build a cover. We do not use the number of real voters for this bound.
Let $\lastballot$ be the number of real ballots on the \pbb.
For simplicity, assume padded group sizes are powers of two, i.e., $\groupsize_i =
2^i$ for $i \geq 0$.
Given $\lastballot$ ballots, any distribution can have at most
$\nrvotersgroup_0 = \lastballot$ groups of size $\groupsize_0 = 1$ (one
ballot per voter).
Similarly, any distribution can have at most
$\nrvotersgroup_1 = \lfloor\lastballot/2\rfloor$ groups of size $\groupsize_1 = 2$.
Recall we pad ballot groups to the next bigger size, so a ballot group of 3
would be padded to one of size $\groupsize_2 = 4$ ballots, therefore
$\nrvotersgroup_2 = \lfloor\lastballot/3 \rfloor$. More generally,
there can be at most $\nrvotersgroup_i = \lfloor\lastballot/(2^{i-1} + 1)
\rfloor$ groups of $\groupsize_i = 2^i$ ballots. 
The biggest possible group (if all ballots were cast by the same voter), 
has size $2^{\lceil \log_2 \lastballot \rceil}$. 
Therefore, the size of the cover $\coversize$ is bounded by:
\begin{multline*}
  \coversize = \sum_{i = 0}^{\lceil \log_2 \lastballot \rceil} \nrvotersgroup_i
  \cdot \groupsize_i =
  \lastballot + 
  \sum_{i = 1}^{\lceil \log_2 \lastballot \rceil} 2^i \left\lfloor
    \frac{\lastballot}{2^{i - 1} + 1} \right\rfloor \\
  \leq
  \lastballot + 
  \sum_{i = 1}^{\lceil \log_2 \lastballot \rceil} \frac{2^i}{2^{i - 1} + 1} \lastballot
  \leq
  \lastballot + 
  \sum_{i = 1}^{\lceil \log_2 \lastballot \rceil} 2 \lastballot \\
  =
  (1 + 2\lceil \log_2 \lastballot \rceil) \lastballot.
\end{multline*}

\parait{An efficient cover.}
Knowing the number of real voters $\nrcastvoters$ enables to obtain a tighter cover.
Consider the example of Section~\ref{sec:key-ideas} with $\nrcastvoters = 2$ and $\lastballot = 9$.
If we only consider $\lastballot = 9$, one of the possible distributions
of votes would be having $\groupsize_1 = \lfloor 9 / 2 \rfloor = 4$ groups of size 2.
However, knowing $\nrcastvoters = 2$ rules out this possibility.
There can be at most one group of size two: if there were 2 groups, each of
the 2 voters could only cast 2 ballots, i.e., 4 ballots in total. However,
we know there are $9$ ballots so at least one voter has voted more than twice,
implying that $\groupsize_1 = 1$.

When the number of ballots grows this reasoning becomes intractable.
Consider ballot groups with group sizes, $\groupsize=\coversizebase^{i}$ for 
$i\in[0, \ldots, \lceil\log_\coversizebase \lastballot \rceil]$ for a real
number $\coversizebase > 1$.
We assume that $\lastballot > \nrcastvoters$, otherwise the cover would be trivial: $\cover=\{(\groupsize_0=1, \nrvotersgroup_0=\nrcastvoters)\}$. 
We compute the cover as follows.
\begin{enumerate}[noitemsep]
\item Consider groups of size $\groupsize_0 = k^0 = 1$. As $\lastballot >
  \nrcastvoters$,
  at least one voter must cast more than one ballot,
  resulting in $(\groupsize_0, \nrvotersgroup_0) = (1,
\nrcastvoters - 1)$.
\item Consider groups of size $\groupsize_i = \coversizebase^i$. We know that given
$\lastballot$, there can be at 
most $\alpha_i=\lfloor \lastballot /(\coversizebase^{i-1} + 1)\rfloor$ groups of size $\coversizebase^i$.
The number of groups is also bound by the number of voters. If $\nrcastvoters
\cdot \groupsize_i \geq \lastballot$ then all ballots can be assigned to the
$\nrcastvoters$ voters given groups of maximum size $\groupsize_i$, and we set $\nrcastvoters_i = \nrcastvoters$, otherwise
set $\nrcastvoters_i = \nrcastvoters - 1$ so that one voter is not in this grouping.
Finally, we need at least $\nrvotersgroup_i ( \coversizebase^{i - 1} + 1)$
ballots to make $\nrvotersgroup_i$ groups, but we must have enough ballots left
over to make $\nrcastvoters$ groups in total, i.e., $\lastballot \geq
\nrvotersgroup_i (\coversizebase^{i - 1} + 1) + (\nrcastvoters -
\nrvotersgroup_i).$ Rewriting gives bound
$\beta_i=\lfloor (\lastballot - \nrcastvoters) / \coversizebase^{i-1}\rfloor$.
We set $\nrvotersgroup_i = \min(\alpha_i, \nrcastvoters_i, \beta_i)$.
\end{enumerate}
Assuming $\lastballot > \nrcastvoters$,
the cover has $\coversize = \sum_{i =
  0}^{\lceil\log_\coversizebase \lastballot \rceil} \nrvotersgroup_i
\groupsize_i > \lastballot$ ballots, necessitating dummy ballots,
and $\sum_{i = 1}^{\lceil \log_\coversizebase \lastballot \rceil}
\nrvotersgroup_i > \nrcastvoters$ groups, necessitating dummy voters.

\para{Creating dummy voters and allocating dummy ballots.}\label{sec:dummy-votes}
The TS recovers all voter identifiers $\vid$ by decrypting the $\encryptedvid_i$s, 
and the corresponding ballot indices by decrypting the
$\encryptedindex_i$s.

So far, we assumed that ballot index sequences are continuous. However, there can
be gaps if some tokens were not used (e.g., the coercer does not use some tokens to 
identify index gaps in the filtering phase).
The TS first requests the number of obtained tokens $\lastballot'$
from the PA, and adds exactly $\lastballot' - \lastballot$ dummy ballots to fill
up any gaps, such  that $\lastballot'$ equals the
number of obtained tokens. The TS can create a dummy ballot for voter
$\vid$ by setting $\encryptedvid = \ecenc(\tspk, \vid)$.

Given the current number of ballots $\lastballot'$ and the number of real voters
$\nrcastvoters$ the TS computes a cover $\cover = \{(\groupsize_i, \nrvotersgroup_i)\}_i$.
To this end the TS performs a search to find the best $k$, i.e., the one
that gives the smaller cover. In our
experiments in Section~\ref{sec:evaluation}, $k$ tends to be in the $2$ to $4$
range, and the search takes less than a second.
The TS performs the following steps:

\begin{enumerate}[noitemsep]
\item For every voter $\vid_j, j \in \{1,\ldots,\nrcastvoters\}$ with $t$
  ballots, let
  $(\groupsize_i, \nrvotersgroup_i) \in\cover$ 
be the cover group with the smallest size $\groupsize_i$ such that $\groupsize_i
\geq t$. To ensure that dummy ballots are never counted,
the TS adds $t - \groupsize_i$ dummy votes to $\vid_j$ with descending
(and unused) ballot counters \emph{smaller} than
the last cast vote by this voter.  
\item For each grouping $(\groupsize_i, \nrvotersgroup_i) \in \cover$ let
  $\nrvotersgroup_i'$ be the number of real voters that were assigned to this
  group. The TS adds 
  $\nrvotersgroup_i - \nrvotersgroup_i'$ dummy voters. For each dummy voter, it
  picks a random $\vid$ and initial ballot index $\ballotnumber$ and creates
  $\groupsize_i$ dummy ballots with increasing ballot indices.
\end{enumerate}
In total, the TS adds $\lastdummy = \coversize - \lastballot$ dummies. Given
$\lastballot$ ballots on the bulletin board, $\lastballot + \lastdummy =
O(\lastballot \log \lastballot)$, see the upper bound above. As the filtering phases is linear
in $\lastballot + \lastdummy$ the time complexity of $O(\lastballot \log
\lastballot)$ follows.

\section{Security Analysis}\label{formal-sec}
We analyze \name's ballot privacy, verifiability, and coercion resistance. We follow Bernhard et al.~\cite{
Bernhard:2015:SCA:2867539.2867668} and model the trustees as a single trusted party with keys $(\votepk, \votesk)$, but
we note that the result holds when trustees are distributed. 
We explicitly model the bulletin board \pbb as an append only string $\BB$. To
ease modeling, we use the following 
redefinition of our voting scheme $\votingscheme = (\setup, \gettoken, \vote,
\valid, \filter, \verifyfilter, \tally, \verify)$ where the
algorithms output
changes to the bulletin board rather than posting to it directly.
While Bernhard et al. model voter registration implicitly,
we make the registration step explicit using the $\gettoken$
function because it forms an integral part of our voting scheme and may happen
more than once. The redefined algorithms in $\votingscheme$ are as follows:

\begin{itemize}
\item $\setup(1^\secpar, \electoralroll, \candidatelist)$ as in \setup in \cref{proc:setup} but explicitly returns the public key $\pk = (\papk, \tspk, \votepk)$
  and the corresponding private keys $\pask, \tssk, \votesk$.
\item $\gettoken(i)$ returns a token $\votetoken$ as in $\gettoken()$ in \cref{proc:gettoken}.
\item $\vote(\votetoken, \candidate)$ returns $\ballot$ as in $\vote(\votetoken, \candidate)$ in \cref{proc:vote} but does not post the ballot to the bulletin board. Moreover, the voter first verifies the token $\votetoken$ as in step~\ref{proc:gettoken:validate} of \cref{proc:gettoken}, and returns $\bot$ if it does not validate.
\item $\valid(\BB, \ballot)$ returns the result of $\valid(\ballot)$ in \cref{proc:valid} with respect to the bulletin board $\BB$.
\item $\filter(\BB, \lastballot', \tssk)$ as in $\filter$ in \cref{proc:filter},
  but takes the number of registrations $\lastballot'$ as explicit input, and 
  returns $\selectedvotelist \parallel \filteradditions$ instead of adding them to the board.
\item $\verifyfilter(\BB, \selectedvotelist, \filteradditions)$ runs
  $\verifyfilter$ from \cref{proc:verify-filter} on $\BB' = \BB \parallel
  \selectedvotelist \parallel \filteradditions$ and returns the result.
\item $\tally(\BB, \votesk)$ returns $(\result, \tallyproofdec)$ as in $\tally$ in \cref{proc:tally}.
\item $\verify(\BB, \result, \tallyproofdec)$ is as in $\verify$ in
  \cref{proc:verify} operating on the bulletin board $\BB \parallel \result \parallel \tallyproofdec$.
\end{itemize}

\subsection{Ballot privacy}
We base our ballot privacy definition on the game-based definition by Bernhard et al.~\cite{Bernhard:2015:SCA:2867539.2867668}.
They model ballot privacy using an indistinguishability game 
which simultaneously tracks two bulletin boards, $\BB_0$ for the ``real'' world
and $\BB_1$ for the ``fake'' world. Only one is accessible to the
adversary (see Figure~\ref{fig:bpriv-game}). 
The adversary, controlling the polling authority (PA) and the tally server (TS),needs to determine whether the tally was evaluated over the 
``real'' or ``fake'' world. It can decide how voters vote. 
Formally, the adversary can make calls to the oracle $\oraclevote(\votetoken, \candidate_0,
 \candidate_1)$
to let a user with token $\votetoken$ cast a vote for candidate $\candidate_0$
on $\BB_0$ and a vote for $\candidate_1$ on $\BB_1$; and to the oracle 
$\oraclecast(\ballot)$ to cast ballots $\ballot$
(constructed by the adversary) on $\BB_0$ and $\BB_1$. Because the adversary controls the PA, it can create as many voting tokens as it needs.

The outcome of the election is always computed on the real bulletin board
$\BB_0$. The adversary can once ask to compute the outcome by calling the oracle
$\oracletally(\selectedvotelist, \filteradditions)$ where $\selectedvotelist$, $\filteradditions$
is the output of $\filter$ computed by the adversary. The tally
oracle aborts if $\selectedvotelist, \filteradditions$ is not valid. If the adversary saw the ``real'' result corresponding to $\BB_0$, the tally protocol proceeds as normal and publishes a correct tally proof $\tallyproofdec$ with respect to $\BB_0$. If the adversary saw the ``fake'' bulletin board $\BB_1$, the experiment simulates the tally proof $\Pi$ with respect to $\BB_1$ using the algorithm $\simproof$ and returns the real result $r$.

\begin{figure}[tbp]
  \centering
  {\small
  \begin{tabular}{@{}ll@{}}
    \toprule
    \multicolumn{2}{@{}l}{$\bprivexp{\gamebit}{\adv}{\mathcal{V}}(\secpar, \electoralroll, \candidatelist)$:} \\
    & $(\pk, \pask, \tssk, \votesk) \gets \setup(1^{\secpar}, \electoralroll, \candidatelist)$ \\
    & $\gamebit \gets \adv^{\oracles}(\pk, \pask, \tssk)$ \\
    & Output $\gamebit'$ \\[1mm]
    \multicolumn{2}{@{}l}{$\oraclevote(\votetoken, \candidate_0, \candidate_1)$:} \\
    & Let $\ballot_0 = \vote(\votetoken, \candidate_0)$ and $\ballot_1 = \vote(\votetoken, \candidate_1)$ \\
    & If $\valid(\BB_{\gamebit}, \ballot_\gamebit) = \bot$ return $\bot$ \\
    & Else $\BB_0 \gets \BB_0 \parallel \ballot_0$ and
           $\BB_1 \gets \BB_1 \parallel \ballot_1$ \\[1mm]
    \multicolumn{2}{@{}l}{$\oraclecast(\ballot)$:} \\
    & If $\valid(\BB_{\gamebit}, \ballot) = \bot$ return $\bot$ \\
    & Else $\BB_0 \gets \BB_0 \parallel \ballot$ and
           $\BB_1 \gets \BB_1 \parallel \ballot$ \\[1mm]
    \multicolumn{2}{@{}l}{$\oracleboard()$:} \\
    & return $\BB_{\gamebit}$ \\[1mm]
    \multicolumn{2}{@{}l}{$\oracletally(\selectedvotelist, \filteradditions)$} \\
    & If $\verifyfilter(\BB_{\gamebit}, \selectedvotelist, \filteradditions) = \bot$ return $\bot$ \\
    & $\BB_{\gamebit} \gets \BB_{\gamebit} \parallel \selectedvotelist \parallel \filteradditions$ \\
    & $\BB_{1 - \gamebit} \gets \BB_{1 - \gamebit} \parallel \filter(\BB_{1 - \gamebit}, |\BB_{1 - \gamebit}|, \tssk)$ \\
    & $(\result, \tallyproofdec_0) \gets \tally(\BB_0, \votesk)$ \\
    & $\tallyproofdec_1 = \simproof(\BB_1, \result)$ \\
    & return $(\result, \tallyproofdec_{\gamebit})$ \\
    \bottomrule
  \end{tabular}
  }
  \vspace{-2mm}
  \caption{In the ballot privacy experiment
    $\bprivexp{\gamebit}{\adv}{\mathcal{V}}$, the adversary \adv has access to
    the oracles $\oracles = \{\oraclevote, \oraclecast, \oracleboard,
    \oracletally \}$.
    The adversary controls the TS and the PA.
    It can call $\oracletally$ only once.
  }
  \label{fig:bpriv-game}
\end{figure}

\begin{definition}
  \label{def:bpriv}
  Consider a voting scheme $\votingscheme = (\setup,\allowbreak \gettoken,\allowbreak \vote, \valid, \filter, \verifyfilter, \tally, \verify)$ for an electoral roll $\electoralroll$ and candidate list $\candidatelist$. We say the scheme has \emph{ballot privacy} if there exists an algorithm $\simproof$ such that for all probabilistic polynomial time adversaries \adv 
  \begin{multline*}
    \left|
      \textrm{Pr}\left[\bprivexp{0}{\adv}{\mathcal{V}}(\secpar, \electoralroll, \candidatelist) = 1 \right] - \right.
    \left.
      \textrm{Pr}\left[\bprivexp{1}{\adv}{\mathcal{V}}(\secpar, \electoralroll, \candidatelist) = 1 \right]
    \right|
  \end{multline*}
  is a negligible function in $\secpar$.
\end{definition}

In Appendix~\ref{app:proof-ballot-secrecy}, we prove the
following theorem.
\begin{theorem}
\label{thm:ballot-secrecy}
\name provides ballot privacy under the DDH assumption in the random oracle model.
\end{theorem}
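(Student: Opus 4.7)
The approach is to follow the game-hopping technique of Bernhard et al.~\cite{Bernhard:2015:SCA:2867539.2867668}: walk from $\bprivexp{0}{\adv}{\mathcal{V}}$ to $\bprivexp{1}{\adv}{\mathcal{V}}$ via a sequence of hybrid experiments, bounding each hop's distinguishing advantage by the zero-knowledge or simulation-soundness of the Fiat--Shamir NIZKs used in \name, and ultimately by the NM-CPA security of the vote encryption scheme $\voteenc$, which in turn reduces to DDH in the random oracle model since $\voteenc$ is ElGamal augmented with a Fiat--Shamir proof of valid candidate encoding.

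First, I would replace the real tally proof $\tallyproofdec_0$ produced in $\oracletally$ by a proof obtained from the zero-knowledge simulator of $\mixdecprotocol$, programming the random oracle to make the simulated shuffle and decryption proofs verify against $\BB_1$; this gives a concrete construction of the algorithm $\simproof$. Indistinguishability from the previous game follows from the zero-knowledge property of the underlying NIZKs. After this hop the adversary's view is the same in the real and fake games conditioned on the visible board, so the only remaining difference between $b=0$ and $b=1$ lies in the content of honest ballots on $\BB_b$: each $\oraclevote(\votetoken, c_0, c_1)$ call contributes a ballot whose encrypted vote is $\voteenc(\votepk, c_b)$, while the ephemeral key, encrypted voter identifier, encrypted index, and PA signature come from the adversarially-produced token and are therefore identical on both boards.

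Next, I would swap these honest-vote ciphertexts in one shot by a reduction to the NM-CPA security of $\voteenc$. The reduction embeds the NM-CPA challenge key as $\votepk$, answers $\oraclevote$ queries by calling the NM-CPA left--right oracle on $(c_0, c_1)$, and replays the remaining ballot components (which are produced using $\pask$ and $\tssk$, both known to the reduction since the adversary supplies them via the setup). When the adversary submits $(\selectedvotelist, \filteradditions)$ to $\oracletally$, the reduction checks $\verifyfilter$ and, to compute the real tally $\result$ on $\BB_0$, uses the NM-CPA decryption oracle on each ciphertext in $\selectedvotelist$ that is not itself a challenge ciphertext; selected votes that are challenge ciphertexts are traced back to their originating $\oraclevote$ call, so the reduction answers them from its own records without querying the forbidden decryption oracle. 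Simulation-soundness of the filter-stage NIZKs ensures that the adversary cannot slip an honestly-generated challenge ciphertext into $\selectedvotelist$ under a fresh re-encryption without being detected.

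The main obstacle is establishing the \emph{strong consistency} condition: that the outcomes of $\valid$ on $\BB_0$ and $\BB_1$, and of $\verifyfilter$ on any $(\selectedvotelist, \filteradditions)$ the adversary submits, are independent of $b$ with overwhelming probability, so that the oracles never diverge between the two worlds. Validity in \name rejects duplicate encrypted votes and duplicate ephemeral public keys; the ephemeral keys are chosen by the adversarial PA independently of $c_0, c_1$ and thus match on both boards, while a collision between an adversarial ciphertext and an honest one that differs across boards occurs only with negligible probability by the unforgeability of the ephemeral signatures and the statistical uniqueness of fresh ElGamal ciphertexts. Carefully proving these consistency claims and accounting for the random-oracle programming required both by $\simproof$ and by the NM-CPA reduction is where the bulk of the technical work lies; once these ingredients are in place, the chain of hybrids composes cleanly to yield the negligible bound asserted in Theorem~\ref{thm:ballot-secrecy}.
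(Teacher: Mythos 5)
Your proposal matches the paper's proof in all essentials: first simulate the tally proof (constructing $\simproof$ via the zero-knowledge simulator), then use the soundness of the filter-stage proofs so that the reduction can determine which honest ballots are actually tallied, and finally swap the honest vote ciphertexts by a hybrid reduction to the NM-CPA security of ElGamal plus a Fiat--Shamir validity proof, which Bernhard et al.\ show holds under DDH in the random oracle model. The only differences are presentational --- the paper makes the hybrid over $\oraclevote$ queries explicit as games $H_0,\ldots,H_{\nroraclevotes}$ and handles the re-encryption issue by having $\oracletally$ recompute $\filter$ itself rather than tracing the adversary's selected votes, and what you call ``strong consistency'' of the oracles is not the main locus of difficulty in the paper's argument (nor is it what Bernhard et al.\ mean by that term).
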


Bernhard et al.\cite{Bernhard:2015:SCA:2867539.2867668} also define strong consistency, to ensure that the result $\result$ does not leak
information about individual ballots, and strong correctness to ensure that
valid ballots are never refused by the bulletin board. We restate these notions
and prove that \name satisfies them in 
Appendix~\ref{app:proof-ballot-secrecy}.

\subsection{Coercion resistance}
\newcommand{\crexp}[3]{\code{Exp}^{\code{cr},#1}_{#2,#3}}
\newcommand{\oraclegettoken}{\oraclesym\code{gettoken}}

Coercion resistance means that a coercer should not be able to determine whether a coerced user submitted to
coercion -- assuming it cannot learn this by seeing the result of the election 
(e.g., if there are zero votes for the selected candidate, the coercer knows the coerced user did not submit). 
In \name, this means that the coercer should not be able to determine whether a coerced user voted again, or not. 

\para{Existing coercion resistant models are insufficient.}
Juels, Catalano and Jakobsson (JCJ) model coercion resistance by comparing a
real-world game with an ideal game~\cite{Juels:2005:CEE:1102199.1102213}.
In JCJ, voters evade coercion by providing the coercer with a fake
credential. The real-world models normal execution. The adversary plays the
role of the coercer and chooses a set of corrupted voters and identifies the
coerced voter. Then, the honest voters cast their ballots (or abstain). 
If the coerced voter does not submit she
also casts her true ballot. Thereafter, the adversary is given the credentials
of all corrupt users, a credential for the coerced voter (which is fake if that
voter resists), and the current
bulletin board.
The adversary can now cast more ballots. Upon seeing the
result and the tally proof the adversary decides if the coerced voter submitted.
In the ideal game, the adversary is not shown the content of the bulletin
board, and she is given the true credential of the coerced
voter and can therefore cast real ballots for the coerced voter. However, a
modified tally function does not count ballots for the coerced voter cast by the
adversary if the coerced voter resists. Once the election phase is over, the
adversary is shown only the tally result, not the tally proof.

The JCJ model does not work for the revoting setting where the coerced voter casts another ballot 
\emph{after} casting the ballot under coercion.
Achenbach et al.~\cite{193462} propose a variant 
in which the coerced voter acts after the adversary has cast his votes,
revoting if she resists or doing nothing if she submits.
Thereafter, the adversary is shown 
the new bulletin board and the resulting tally and proof. In the ideal model, the adversary is only provided the length of the bulletin board.

\begin{figure*}[tbp]
 \centering
 \begin{tabular}{@{}l@{}}
 \toprule
 \begin{minipage}[b][][t]{0.9\textwidth}
  \small \centering
  \begin{tabular}{@{}ll@{}}
    \multicolumn{2}{@{}l}{$\crexp{\gamebit}{\adv}{\mathcal{V}}(\secpar, \electoralroll, \candidatelist)$:} \\
    & $(\pk, \pask, \tssk, \votesk) \gets \setup(1^{\secpar}, \electoralroll, \candidatelist)$ \\
    & Create $\PA_0$ and $\PA_1$ with keys $\papk^0, \papk^1$ \\
    & $\gamebit' \gets \adv^{\oracles}(\pk, \papk^\gamebit, \papk^{1 - \gamebit})$ \\
    & Output $\gamebit'$ \\[1mm]
    \multicolumn{2}{@{}l}{$\oraclevote(i_0, \candidate_0, i_1, \candidate_1)$:} \\
    & Let $\votetoken_0 \gets \PA_0.\gettoken(i_0)$ and
          $\votetoken_1 \gets \PA_1.\gettoken(i_1)$ \\
    & Let $\ballot_0 = \vote(\votetoken_0, \candidate_0)$ and $\ballot_1 = \vote(\votetoken_1, \candidate_1)$ \\
    & If $\valid(\BB_{\gamebit}, \ballot_\gamebit) = \bot$ return $\bot$ \\
    & Else $\BB_0 \gets \BB_0 \parallel \ballot_0$ and
           $\BB_1 \gets \BB_1 \parallel \ballot_1$ \\[1mm]
    \multicolumn{2}{@{}l}{$\oraclegettoken(i)$:} \\
    & Let $\votetoken_0 \gets \PA_0.\gettoken(i)$ and
          $\votetoken_1 \gets \PA_1.\gettoken(i)$ \\
    & return $\votetoken = \votetoken_\gamebit, \votetoken' = \votetoken_{1-\gamebit}$ \\[2mm]
  \end{tabular}
  \phantom{space}
  \begin{tabular}{@{}ll@{}}
    \multicolumn{2}{@{}l}{$\oraclecast(\ballot, \ballot')$:} \\
    & Let $\ballot_{\gamebit} \gets \ballot$ and $\ballot_{1 - \gamebit} \gets \ballot'$ \\
    & If $\valid(\BB_{0}, \ballot_{0}) = \bot$ or
         $\valid(\BB_{1}, \ballot_{1}) = \bot$ return $\bot$ \\
    & Else $\BB_0 \gets \BB_0 \parallel \ballot_0$ and
           $\BB_1 \gets \BB_1 \parallel \ballot_1$ \\[1mm]
    \multicolumn{2}{@{}l}{$\oracleboard()$:} \\
    & return $\BB_{\gamebit}$ \\[1mm]
    \multicolumn{2}{@{}l}{$\oracletally()$} \\
    & Let $\lastballot'$ be the number of tokens obtained from $\PA_0$. \\
    & Let $\selectedvotelist_0, \filteradditions_{0} = \filter(\BB_{0}, \lastballot', \tssk)$
    \\
    & Let $(\result, \tallyproofdec_0) \gets \tally(\BB_0 \parallel \selectedvotelist_0 \parallel \filteradditions_0, \votesk)$ \\
    & Let $(\selectedvotelist_1, \filteradditions_1) \gets \simfilter(\BB_1, \lastballot', \result)$ \\
    & Let $\BB_{0} \gets \BB_{0} \parallel \selectedvotelist_0 \parallel \filteradditions_0$ and
    $\BB_{1} \gets \BB_{1} \parallel \selectedvotelist_1 \parallel \filteradditions_1$ \\
    & $\tallyproofdec_1 = \simproof(\BB_1, \result)$ \\
    & return $(\result, \tallyproofdec_{\gamebit})$ \\
  \end{tabular}
  \end{minipage}\\[-1mm]
  \bottomrule
  \end{tabular}
  \vspace{-3mm}
  \caption{In the coercion resistance experiment
    $\crexp{\gamebit}{\adv}{\mathcal{V}}$, adversary \adv has access to
    oracles $\oracles = \{\oraclevote,\allowbreak \oraclegettoken,\allowbreak \oraclecast,\allowbreak \oracleboard,\allowbreak \oracletally \}$.
    It can call $\oracletally$ only once, thereafter it can see the result $\filteradditions_{\gamebit}$ by using $\oracleboard()$.}
  \label{fig:coercion-resistance-game}
\end{figure*}

The model proposed by Achenbach et al.~\cite{193462} does not capture coercion resistance. 
Following the real/ideal paradigm,
in the ideal game
it should hold with overwhelming probability that the adversary cannot distinguish between a 
submitting and a resisting coerced voter.
Then, the proof would show that the adversary cannot learn more in the
real world than it could in the ideal world. 
However, in the ideal game proposed by Achenbach et al., the coercion resistance property does not hold. 
The adversary can \emph{always} distinguish between these two cases by simply observing the length of the bulletin board 
(which increases by one ballot if the coerced voter revotes). Therefore, any proofs in this model say nothing 
about whether the real scheme offers coercion resistance. While the Achenbach et
al.~\cite{193462} scheme seems to be coercion resistant,
coercion resistance does not follow from the proof in their model.

Finally, the model by Achenbach et al.\ does not capture the leakage resulting from the state kept by the voter, 
or as in our protocol, by the polling authority. Our protocol deliberately hides the ballot counter from the voter, 
so that when if the coercer coerces the voter again, it cannot determine whether the coerced 
voter re-voted based on this counter. In Achenbach et al.'s model, the coercer
cannot coerce a voter more than once.

\para{A new coercion resistance definition.}
We propose a new game-based coercion resistance definition
inspired by Bernhard et al.'s ballot privacy definition.
The game tracks two bulletin boards, $\BB_0$ and $\BB_1$, of which only one is accessible to the
adversary (depending on the bit $\gamebit$). We ensure that regardless of the bit $\gamebit$, the same 
number of ballots are added to the bulletin board. The goal of the adversary is
to determine $\gamebit$
(see Figure~\ref{fig:coercion-resistance-game}).
Recall that we assume that the PA, TS, and trustees are honest with respect to coercion resistance.

To model submits versus resists, we provide the adversary with an
$\oraclevote(i_0, \candidate_0, i_1, \candidate_1)$ oracle to let voter $i_0$, a
``coerced'' voter, cast a vote for candidate $\candidate_0$ in $\BB_0$, and
voter $i_1$, any other voter, cast a vote for candidate $\candidate_1$ in
$\BB_1$. The adversary is allowed to make this call multiple times.
Regardless of the value of $\gamebit$, every call to $\oraclevote$ results in a
single ballot being added to each BB. This prevents the trivial win in the Achenbach et al.\ model. 
Since the polling authority keeps state, we work with two PAs: $\PA_0$ and
$\PA_1$.

We model a coercion attack as follows. The adversary can cast votes using any
user by calling $\oraclegettoken(i)$ to obtain a voting token
$\votetoken$ for voter $i$ on the board that it can see, and a token
$\votetoken'$ for the other board. 
It can then run $\ballot = \vote(\votetoken, \candidate)$ and $\ballot' =
\vote(\votetoken', \candidate)$ itself to create ballots for candidate
$\candidate$, on both boards and cast them
using $\oraclecast(\ballot, \ballot')$. Note that per our assumptions, the
adversary does not get access to the voter's means of authentication. Moreover,
we require that the adversary always casts valid ballots to both boards (but the
encoded candidate need not be the same).

Finally, the adversary can make one call to $\oracletally()$ which performs the filtering step and returns the result $\result$ 
(always computed on $\BB_0$)
and the tally proof.
The result of $\filter$ is accessible using $\oracleboard$.
To correct for leakage stemming from the tally result, as in the ballot
privacy game, we simulate the filter and tally proofs if the adversary sees $\BB_1$.

This game models all the coercion attacks applicable to \name:
\begin{itemize}
\item \emph{The 1009 attack}. The
  adversary casts a ballot as coerced voter $i_0$ using $\votetoken, \votetoken' =
  \oraclegettoken(i_0)$, $\ballot = \vote(\votetoken, \candidate)$, $\ballot' =
  \vote(\votetoken', \candidate)$ and then
  $\oraclecast(\ballot, \ballot')$ 1009 times. (Both boards now contain 1009
  ballots by voter $i_0$.) Then it calls $\oraclevote(i_0, \candidate,
  i_1, \candidate)$. If $\gamebit = 0$ the coerced voter revotes for candidate
  $\candidate$ on $\BB_0$, otherwise it does not, and the alternative voter casts a ballot
  for candidate $\candidate$ on $\BB_1 $ visible to the adversary.
  Note that if the result of $\filter$ $\filteradditions$ reveals the size of
  a group of ballots, the adversary can win this game 
  (\simfilter does not model this leakage as it only gets 
  $\lastballot'$ and $\result$ as input).
\item \emph{Returning coercer}.
  Let voter $i_0$ be the coerced voter. First
  the coercer runs $\votetoken, \votetoken' = \gettoken(i_0)$, $\ballot =
  \vote(\votetoken, \candidate)$ and $\ballot' = \vote(\votetoken',
  \candidate)$, and $\oraclecast(\ballot, \ballot')$ to cast one vote as the
  coerced user on both boards and to observe the token $\votetoken$
  corresponding to the board $\BB_{\gamebit}$ it can see. Then it
  runs $\oraclevote(i_0, \candidate_0, i_1, \candidate_1)$, causing $i_0$ to
  cast a vote on the bulletin board $\BB_{\gamebit}$ if $\gamebit = 0$,
  and $i_1$ to casts a vote on $\BB_{\gamebit}$ if $\gamebit = 1$.
  Thereafter, it can examine the state by running
  $\votetoken, \votetoken' = \gettoken(i_0)$ again. If the new token $\votetoken$
  leaks whether voter $i_0$ voted again (on board $\BB_{\gamebit}$), then the adversary 
  wins the coercion resistance game.
\end{itemize}

\begin{definition}
  \label{def:cr}
  Consider a voting scheme $\votingscheme = (\setup,\allowbreak \gettoken,\allowbreak \vote, \filter,
  \verifyfilter, \tally, \verify)$ for an electoral roll $\electoralroll$ and
  candidate list $\candidatelist$. We say the scheme has \emph{coercion
    resistance} if there exist algorithms $\simfilter$ and $\simproof$ such that for all probabilistic polynomial time adversaries \adv
  \begin{equation*}
    \left|
      \textrm{Pr}\left[\crexp{0}{\adv}{\mathcal{V}}(\secpar, \electoralroll, \candidatelist) = 1 \right] -
      \textrm{Pr}\left[\crexp{1}{\adv}{\mathcal{V}}(\secpar, \electoralroll, \candidatelist) = 1 \right]
    \right|
  \end{equation*}
  is a negligible function in $\secpar$.
\end{definition}

In Appendix~\ref{app:proof-coercion-resistance}, we prove the
following theorem.
\begin{theorem}
\label{thm:coercion-resistance}
\name provides coercion resistance under the DDH assumption in the random oracle model.
\end{theorem}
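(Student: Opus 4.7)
The plan is to prove computational indistinguishability of the $\gamebit = 0$ and $\gamebit = 1$ experiments via a hybrid argument that reduces to IND-CPA security of ElGamal (equivalent to DDH) and to zero-knowledge of the proof systems used in \name, in the random oracle model. First I would fix the two simulators. For $\simproof$, I would invoke the standard ZK simulator of $\mixdecprotocol$. For $\simfilter(\BB_1, \lastballot', \result)$, I would run $\filter$ honestly on $\BB_1$ through the tag-and-pad step, the shuffle, and the decryption of the $\encryptedvid_i, \encryptedindex_i$, since all of these depend only on $\BB_1$ and $\lastballot'$. Then at the selection step (step~\ref{proc:filter:reencryption}) I would replace each $\preselectedvote{j}$ by a re-encryption crafted so that the downstream $\mixdecprotocol$ output equals the given $\result$, simulating each OR-proof $\reencryptionproof_j$ using its ZK simulator, and I would proceed through the second shuffle and dummy-opening step with simulated proofs where needed.

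Next, I would define a chain of hybrids starting from the $\gamebit = 0$ experiment. (i) Replace all proofs generated by the challenger on the $\BB_0$ side with simulated proofs; the gap to the real game is bounded by the ZK soundness of the shuffle, decryption, selection, and tally proofs. (ii) Swap the real $\filter(\BB_0, \lastballot', \tssk)$ output for $\simfilter(\BB_0, \lastballot', \result)$, indistinguishable by re-randomization of ElGamal together with the ZK simulation of the selection proofs. (iii) For each of the $Q$ oracle queries on which $\BB_0$ and $\BB_1$ differ (that is, $\oraclevote(i_0, \candidate_0, i_1, \candidate_1)$ with $i_0 \neq i_1$ or $\candidate_0 \neq \candidate_1$, and distinct-ballot $\oraclecast$ calls), perform IND-CPA hops swapping the $\encryptedvid, \encryptedindex$, and $\encryptedvote$ ciphertexts from their $\BB_0$ values to their $\BB_1$ values. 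Each hop reduces to DDH because, under the coercion-resistance trust assumptions, the adversary holds neither $\tssk$ nor $\votesk$. After these $Q$ hops the visible board is $\BB_1$ and the filter output is $\simfilter(\BB_1, \lastballot', \result)$, i.e. the $\gamebit = 1$ experiment. Summing the hybrid gaps yields a negligible bound.

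The main obstacle will be handling the structural information revealed by $\filteradditions$ through the cover. Because the cover is derived deterministically from $\lastballot'$ and the number of distinct decrypted voter identifiers $\nrcastvoters$, and because $\oraclevote$ permits different voters on the two boards, the covers published on $\BB_0$ and $\BB_1$ may in principle differ, which the adversary could exploit to distinguish independently of the IND-CPA hops. The core lemma I would need is that the joint distribution of $(\selectedvotelist, \filteradditions)$ produced by $\simfilter$ depends only on $(\lastballot', \result)$ up to negligible terms: differences in the plaintext voter grouping become hidden by the IND-CPA swaps on the $\encryptedvid_i$ ciphertexts once the filter and tally proofs are already simulated, so only the published cover skeleton remains visible. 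This is the formal counterpart of the caveat in Definition~\ref{CoerResis} that ``the election outcome does not leak this information'': whenever $\result$ is the same in both worlds, the simulator can pick its dummies on $\BB_1$ so that the cover it publishes is consistent with the adversary's view, and the argument parallels the ballot privacy proof of Theorem~\ref{thm:ballot-secrecy} but with the extra care that the deterministic padding reveals exactly the aggregate structure $(\lastballot', \nrcastvoters)$ and nothing more.
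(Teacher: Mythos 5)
Your overall architecture (simulate the proofs, then do hybrid hops on the ciphertexts) matches the paper's, and you correctly identify the central issue --- that the published cover must be a function of public aggregate data only --- but two concrete gaps remain. First, your construction of $\simfilter$ is inconsistent with your own ``core lemma'' and, as stated, fails: you propose to run $\filter$ honestly on $\BB_1$ ``through the tag-and-pad step, the shuffle, and the decryption of the $\encryptedvid_i, \encryptedindex_i$.'' Decrypting and grouping on $\BB_1$ publishes $\BB_1$'s true number of distinct voters and hence $\BB_1$'s cover, which can differ from $\BB_0$'s (e.g., in the 1009 attack, $\BB_0$ has one voter with 1010 ballots while $\BB_1$ has two voters with 1009 and 1); the adversary then distinguishes by inspecting the group sizes in $\filteradditions$, independently of any ciphertext swaps. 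Moreover $\simfilter$ receives only $(\BB_1, \lastballot', \result)$ and has no $\tssk$, so it cannot decrypt at all. The paper's $\simfilter$ instead fabricates the \emph{entire} post-election transcript --- random ciphertexts, fake voter identifiers laid out according to the cover derived from $\lastballot'$ and the voter count extracted from $\result$, and simulated proofs throughout --- so that its output is a function of $(\lastballot, \lastballot', \result)$ alone, which are identical in both worlds by construction of the oracles.

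Second, your reductions are plain IND-CPA hops, but the challenger in those reductions still has obligations that require decryption. To stop decrypting $\encryptedvid_i, \encryptedindex_i$ you need the paper's intermediate step in which the challenger, controlling both PAs and relying on unforgeability of the PA signatures, associates the known plaintexts to every valid ciphertext on the board and runs $\filter$ without decrypting; only then does a CPA reduction go through. For the vote ciphertexts the situation is worse: the result $\result$ returned by $\oracletally$ must include the adversary's own ballots cast via $\oraclecast$, whose plaintexts the challenger does not know, so the reduction must decrypt them --- which is why the paper reduces to NM-CPA (using its one-shot decryption oracle on the non-challenge ciphertexts) rather than IND-CPA, after first replacing the tally by a direct evaluation of the result function via strong consistency. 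Without these steps your hybrid chain cannot be instantiated as actual reductions to DDH.
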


\subsection{Verifiability}
In their analysis, Achenbach et al.~\cite{193462} adapt the correctness definition
of Juels et al.~\cite{Juels:2005:CEE:1102199.1102213} to the revoting
setting. However, Achenbach et al.'s model does not take into account that
voters may not check that their ballots are cast correctly, nor that newer ballots 
should supersede older ballots \emph{even} if voters have been coerced or corrupted.
To address these cases, we adapt the qualitative game-based verifiability
definition of Cortier et al.~\cite{CortierGGI14} -- which accounts for a malicious
bulletin board and voters not checking their ballots -- to our setting by adding
the \gettoken function and explicitly modeling revoting.
As in Cortier et al.~\cite{CortierGGI14}, our game does not model voter's intent, 
and assumes that the voting hardware, i.e., the device and software running 
\vote, is honest. We refer to Cortier et al.~\cite{CortierGKMT16} for a formal process-based 
computational model that does model verifiability with voter intent.
We note that the correctness 
definition by Juels et al.~\cite{Juels:2005:CEE:1102199.1102213} was renamed to 'verifiability' by Cortier et al.~\cite{CortierGGI14}, and 
therefore any model satisfying the latter satisfies the former.

\begin{figure*}[!htb]
 \centering
  {\small
  \begin{minipage}[t]{0.72\textwidth}
  \vspace{0cm}
  \begin{tabular}{@{}>{\tiny}rl@{}l@{}}
    \toprule
    0 & \multicolumn{2}{@{}l}{$\verifbexp{\gamebit}{\adv}{\mathcal{V}}(\secpar, \electoralroll, \candidatelist)$:} \\
    1 & & $(\pk, \pask, \tssk, \votesk) \gets \setup(1^{\secpar}, \electoralroll, \candidatelist)$ \\
    2 & & Set $\honestvotes \gets \emptyset$ and $\corruptedset \gets \emptyset$ \\
    3 & & $(\BB, \selectedvotelist, \filteradditions, \result, \tallyproofdec) \gets \adv^{\oracles}(\pk, \tssk, \votesk)$ \\
    4 & & If $\verifyfilter(\BB, \selectedvotelist, \filteradditions) = \bot$
          or $\verify(\BB \parallel \selectedvotelist \parallel \filteradditions, \result, \tallyproofdec) = \bot$ return $0$ \\
    5 & & Let $\checked = \{ (i_1, \ctr_1), \ldots, (i_{\nrchecks}, \ctr_{\nrchecks}) \}$ correspond to checked ballots. \\
    6 & & Let $\textsf{Corrupted} =
      \{ i \;|\; (i, \ctr) \in \corruptionevents \land
             \forall (i, \ctr') \in \checked: \ctr' < \ctr \}$ \\
    7 & & Let $\textsf{Checked} =
      \{ i \;|\; (i, \_) \in \checked \} \setminus \textsf{Corrupted}$\\
    8 & & Let $\textsf{Unchecked}
      = \{ i \;|\; (i, \_, \_) \in \honestvotes \land (i, \_) \not\in \corruptionevents \} \setminus \textsf{Checked}$ \\
    9 & & Let $\textsf{AllowedVotes}[i] = \{ \candidate \;|\; (i, \ctr, \candidate) \in \honestvotes \textrm{ s.t. } \forall (i, \ctr') \in \checked: \ctr \geq \ctr' \}$ \\
    10 & & If $\exists \;
      \candidate_1^{V}, \ldots, \candidate_{\nrverified}^{V} \textrm{ s.t. } \candidate_j \in \textsf{AllowedVotes}[i_j^{V}]$ where $\textsf{Checked} = \{i_1^V, \ldots, i_{\nrverified}^V\}$ 
      \\
    11 & & \phantom{If} $\exists \;
      (i_1^{U}, \candidate_1^{U}), \ldots,
      (i_{\nrunchecked}, \candidate_{\nrunchecked}^U) \textrm{ s.t. }
      i_j^{U} \in \textsf{Unchecked}, 
      \candidate_j^{U} \in \textsf{AllowedVotes}[i_j^{U}]$,
      $i_j^{U}$ different  \\
    12 & & \phantom{If} $\exists \; \candidate_1^{B}, \ldots, \candidate_{\nrbad}^{B}
      \in \candidatelist$ s.t. $0 \leq \nrbad \leq |\textsf{Corrupted}|$\\
    13 & & \phantom{If} s.t. $\result =
      \partialtally(\{\candidate_i^{V}\}_{i = 1}^{\nrverified}) \star_R
      \partialtally(\{\candidate_i^{U}\}_{i = 1}^{\nrunchecked}) \star_R
      \partialtally(\{\candidate_i^{B}\}_{i = 1}^{\nrbad})$ \\
    14 & & Then return $0$, otherwise return $1$ \\
    \bottomrule
  \end{tabular}
  \end{minipage}
  \begin{minipage}[t]{0.27\textwidth}
    \vspace{0cm}
  \fbox{
  \begin{tabular}{@{}ll@{}}
    \multicolumn{2}{@{}l}{$\oraclevotesimple(i, \candidate)$:} \\
    & Let $\votetoken = \gettoken(i)$ \\
    & Add $(i, \nrtokens{i}, \candidate)$ to $\honestvotes$ \\
    & Return $\vote(\votetoken, \candidate)$ \\[2mm]
    \multicolumn{2}{@{}l}{$\oraclegettoken(i)$:} \\
    & Let $\votetoken = \gettoken(i)$ \\
    & Add $(i, \nrtokens{i})$ to $\corruptionevents$ \\
    & return $\votetoken$ \\
  \end{tabular}}
  \end{minipage}
  }
  \vspace{-3mm}
  \caption{In the verifiability game experiment
    $\verifbexp{\gamebit}{\adv}{\mathcal{V}}$, the adversary \adv has access to
    the oracles $\oracles = \{\oraclegettoken, \oraclevotesimple\}$.}
  \label{fig:verb-game}
\end{figure*}

In a nutshell, a voting scheme is verifiable~\cite{CortierGGI14} if for 
$\nrcorrupted$ corrupt voters, the result of the election always includes: 
(1) all votes by honest voters that verified whether their ballots were cast correctly, 
(2) at most $\nrcorrupted$ corrupted votes, and
(3) a subset of the votes by honest voters that did not check if their ballots were cast correctly. 
These conditions ensure that while a malicious bulletin board can drop ballots of voters 
that do not check, it can insert at most $\nrcorrupted$ new votes. 

\para{Extending the current verifiability definition.} We extend the definition
presented by Cortier et al.~\cite{CortierGGI14} for the revoting setting to
explicitly consider the number of votes cast by a voter, see Figure~\ref{fig:verb-game}.
The PA is honest, but the adversary controls the bulletin board, the TS, and the trustees.
The system implicitly tracks the number of tokens $\nrtokens{i}$ that have been obtained by voter $i$.
The game tracks when each voter is corrupted in a (initially empty) list of corruption events $\corruptionevents$, 
and tracks the honest votes in $\honestvotes$. The adversary can call two oracles: $\oraclevotesimple(i, \candidate)$ to request 
that honest voter $i$ outputs a ballot for candidate $\candidate$, and $\oraclegettoken(i)$ to get a voting 
token for user $i$. Note that this models \emph{both corruption and coercion} of voter $i$. After a call to $\oraclegettoken(i)$, 
voter $i$ is considered corrupted until it casts an honest ballot using $\oraclevotesimple(i, \candidate)$.
Eventually, the adversary outputs a bulletin board $\BB$, the
selected votes $\selectedvotelist$ and proof $\filteradditions$,
the election outcome $\result \in \resultspace$, and a tally proof
$\tallyproofdec$ (line 3). The adversary loses if
$\filteradditions$ or $\tallyproofdec$ do not verify (line 4). If it verifies,
the adversary wins if the result does not 
satisfy the three intuitive conditions above.  

The game computes the following groups of voters:
\begin{itemize}
  \item \textsf{Corrupted} (line 6): voters considered corrupted,
    i.e., voters that were once corrupted (by calling $\oraclegettoken$)
    and thereafter never cast a checked honest vote.
  \item \textsf{Checked} (line 7): voters that verified a ballot and were
    not corrupted thereafter.
  \item \textsf{Unchecked} (line 8): voters that were never corrupted, but did not
    check their ballots either.
\end{itemize}
The game computes allowed candidates for honest voters:
\begin{itemize}
  \item $\textsf{AllowedVotes}[i]$ (line 9) A list of candidates that voter $i$ honestly voted for in or after the last checked ballot. If voter $i$ never checked a ballot, this list includes all candidates this voter ever voted for.
\end{itemize}

The adversary wins if the result $\result$ verifies but 
violates any of the following conditions (lines 10--13):
(1) For each honest voter that verified a ballot and was not thereafter
corrupted (i.e., voters in \textsf{Checked}) the result should include either the candidate in that ballot, or a candidate in a later ballot. This corresponds to the candidates $\{\candidate_i^{V}\}_{i = 1}^{\nrverified}$ in the game.
(2) Of the honest voters that did not check their ballots but were never
corrupted (i.e., voters in \textsf{Unchecked}), at most one candidate that the honest voter voted for (in any ballot) can be included. This corresponds to the candidates
  $\{\candidate_i^{U}\}_{i = 1}^{\nrunchecked}$ in the game, where $\nrunchecked$ can be smaller than $|\textsf{Unchecked}|$ or in fact 0.
(3) At most $\nrcorrupted$ corrupted (or bad) votes were counted (i.e., the candidates $\{\candidate_i^{B}\}_{i = 1}^{\nrbad}$)

In the game, the sum of these choices is modeled by the tallying
function $\partialtally:
\candidatelist^* \to \resultspace$ that maps the voter's choices in 
$\candidatelist$ to an election result in $\resultspace$. This function should
support partial tallying, i.e., for any two lists $S_1$ and $S_2$ we have
that $\partialtally(S_1 \cup S_2) = \partialtally(S_1) \partialoperator
\partialtally(S_2)$ for a commutative binary operator $\partialoperator :
\resultspace \times \resultspace \to \resultspace$.
Note that a tally function that outputs the number of votes per candidate 
naturally admits partial tallying.
\begin{definition}
  \label{def:ver}
  Consider a voting scheme $\votingscheme = (\setup,\allowbreak \gettoken,\allowbreak \vote, \filter,
  \verifyfilter, \tally, \verify)$ for an electoral roll $\electoralroll$ and
  candidate list $\candidatelist$. We say the scheme is \emph{verifiable}
  if for all probabilistic polynomial time adversary \adv
  \begin{equation*}
    \left|
      \textrm{Pr}\left[\verifbexp{0}{\adv}{\mathcal{V}}(\secpar, \electoralroll, \candidatelist) = 1 \right] -
      \textrm{Pr}\left[\verifbexp{1}{\adv}{\mathcal{V}}(\secpar, \electoralroll, \candidatelist) = 1 \right]
    \right|
  \end{equation*}
  is a negligible function in $\secpar$.
\end{definition}

In Appendix~\ref{app:proof-verifiability}, we prove the
following theorem.
\begin{theorem}\label{thm:verifiability}
\name is verifiable under the DDH assumption in the random oracle model.
\end{theorem}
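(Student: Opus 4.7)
The plan is to reduce the verifiability game to standard assumptions: EUF-CMA unforgeability of the signature scheme $(\signkeygen, \signsign, \signverify)$, soundness of the Fiat--Shamir non-interactive zero-knowledge proofs used inside $\filter$ and $\mixdecprotocol$ (which holds in the random oracle model under DDH), and NM-CPA security of ElGamal (itself implied by DDH in the ROM). The argument is by contradiction: if a PPT $\adv$ wins $\verifbexp{}{\adv}{\votingscheme}$ with non-negligible probability, then either a PA signature is forged, a Fiat--Shamir proof is accepted on a false statement, or DDH is broken.

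First I condition on $\verifyfilter$ and $\verify$ accepting. By soundness of the shuffle proof $\shuffleproof$, the decryption proofs $\decryptproof_i$, the reencryption proofs $\reencryptionproof_j$, the second shuffle $\shuffleproof'$, and the opening of dummy randomizers, with overwhelming probability $\selectedvotelist$ contains exactly one ciphertext per distinct $\vid$ appearing among real ballots on the BB, and each such ciphertext is a reencryption of the BB ballot whose decrypted index $\decryptedindex{i}$ is maximal for that $\vid$; dummy-voter groups (with every tag equal to $\tstag_D$) collapse to zero encryptions that are then provably removed. By soundness of $\mixdecprotocol$, $\result$ is $\resultfunction$ applied to the plaintexts of these selected real votes.

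Next I use the honesty of the PA. Each registered voter $i$ has a unique $\vid_i$, and the PA issues tokens for $i$ with strictly increasing indices $\ballotnumber_i$. The signature $\tokensignature$ binds $\pk$, $\encryptedvid$, and $\encryptedindex$ together, and $\valid$ rejects ballots reusing any $\pk$ or any $\encryptedvote$ already on the BB, so each PA-issued token can correspond to at most one BB ballot. With this I verify the three conditions. For $i \in \textsf{Checked}$, let $\tokencounter$ be the index of the latest verified honest ballot; since $i \notin \textsf{Corrupted}$, no token for $i$ was issued to $\adv$ after $\tokencounter$, so every BB ballot with $\vid_i$ and index $\geq \tokencounter$ came from $\oraclevotesimple$. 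Combined with the filter-soundness property that the selected ballot for $\vid_i$ has maximal index, the selected candidate lies in $\textsf{AllowedVotes}[i]$. For $i \in \textsf{Unchecked}$, all BB ballots with $\vid_i$ are honest, so the selected vote (if any) contributes a single candidate from $\textsf{AllowedVotes}[i]$. For bad votes, inalienable authentication plus PA honesty forces $\adv$ to obtain tokens for voter $i$ only through $\oraclegettoken(i)$, and since each distinct $\vid$ yields at most one selected vote, the count of adversarial votes is bounded by $|\textsf{Corrupted}|$.

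The main technical obstacle is ruling out that $\adv$ injects a BB ballot carrying an honest voter's $\vid_i$ without either forging a PA signature or replaying a PA-issued token. Because $\encryptedvid$ is a fresh ciphertext in every token, a clean reduction must use $\tokensignature$-unforgeability to tie every BB entry whose identifier decrypts to $\vid_i$ back to an actual $\gettoken(i)$ call, and then independently exclude replay via the duplicate-$\pk$ check in $\valid$. A second delicate point is the interaction between corruption and subsequent honest voting: monotonicity of the PA's index counter guarantees that an honest ballot cast after corruption has a strictly larger index than any adversarial ballot issued under earlier tokens for $i$, so honest voters who revote after coercion are never silently overwritten by an earlier bad ballot, which is exactly what makes the \textsf{Checked} clause go through in the revoting setting.
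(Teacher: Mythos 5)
Your proposal is correct and follows essentially the same route as the paper's proof: condition on \verifyfilter and \verify accepting, use soundness of the shuffle, decryption, and reencryption proofs (plus the tag structure and the DL-based argument for dummy removal) to pin down that each voter group contributes exactly its maximal-index real ballot or an empty vote, use PA signature unforgeability to tie every board entry to an actual token issuance, and then case-split over \textsf{Checked}, \textsf{Unchecked}, and \textsf{Corrupted}. Your explicit remarks on the monotone index counter and on excluding injected ballots carrying an honest $\vid$ are points the paper handles more implicitly (via the tag-disjunct argument for \textsf{Checked} voters), but the substance is the same.
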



\section{Performance Evaluation}
\label{sec:evaluation}

\begin{figure*}[tb]
  \begin{tikzpicture}\label{overhead-unlimited}
    \begin{semilogxaxis}[
      xlabel={Number of Voters},
      ylabel={Overhead (\#dummies / \#ballots)},
      ymin=0, ymax=34,
      width=0.38\textwidth,
      height=5.5cm,
      legend style={fill opacity=.5,text opacity=1},
    ]
    \foreach \p in {0, 10, 20, 50, 100, 200}{
      \addplot table [mark=none,green!20!black,x=NrVoters,y=\p,col sep=comma]{parts/csv-files/normal_overhead.csv};
      \addlegendentryexpanded{\p\%  \text{ Revotes}}
    }
    \end{semilogxaxis}
  \end{tikzpicture}
  \begin{tikzpicture}\label{overhead-1ps}
    \begin{semilogxaxis}[
      xlabel={Number of Voters},
      ymin=0, ymax=34,
      width=0.38\textwidth,
      height=5.5cm,
    ]
    \foreach \p in {1/min,1/10 seconds,1/sec,No limit}{
      \addplot table [mark=none,green!20!black,x=NrVoters,y=\p,col sep=comma]{parts/csv-files/max_votes_limit.csv};
      \addlegendentryexpanded{\p}
    }
    \end{semilogxaxis}
  \end{tikzpicture}
  \begin{tikzpicture}\label{overhead-majority-one}
	\begin{semilogxaxis}[
    xlabel={Number of Voters},
    ymin=0, ymax=34,
    width=0.38\textwidth,
    height=5.5cm,
	]
	\foreach \p in {1,5,10,100}{
		\addplot table [mark=none,green!20!black,x=NrVoters,y=\p,col sep=comma]{parts/csv-files/max_votes_voter_limit.csv};
		\addlegendentryexpanded{\p\%}
	}
	\end{semilogxaxis}
  \end{tikzpicture}
  \vspace{-8mm}
  \caption{\label{fig:dummies-overhead} Dummy ballots overhead:
  Varying percentages of revotes (left); limiting to 50\% overhead (center);
  and limiting to 50\% overhead, 1 ballot per 10
    seconds, and bounding the percentage of voters revoting (right).}
\end{figure*}  

\begin{figure*}[tb]
  \includegraphics[width=0.33\textwidth]{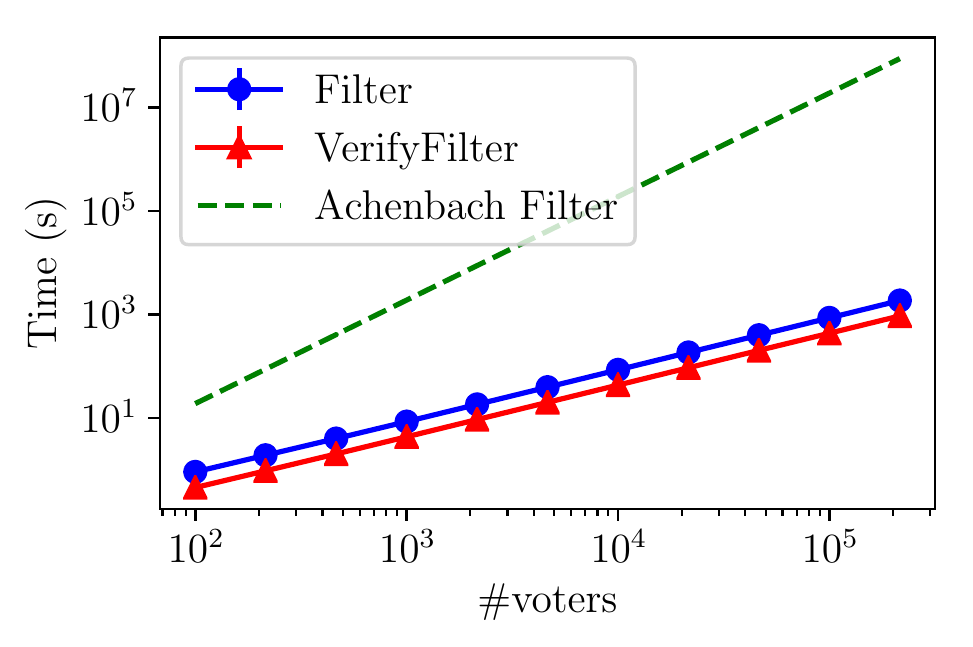}
  \includegraphics[width=0.33\textwidth]{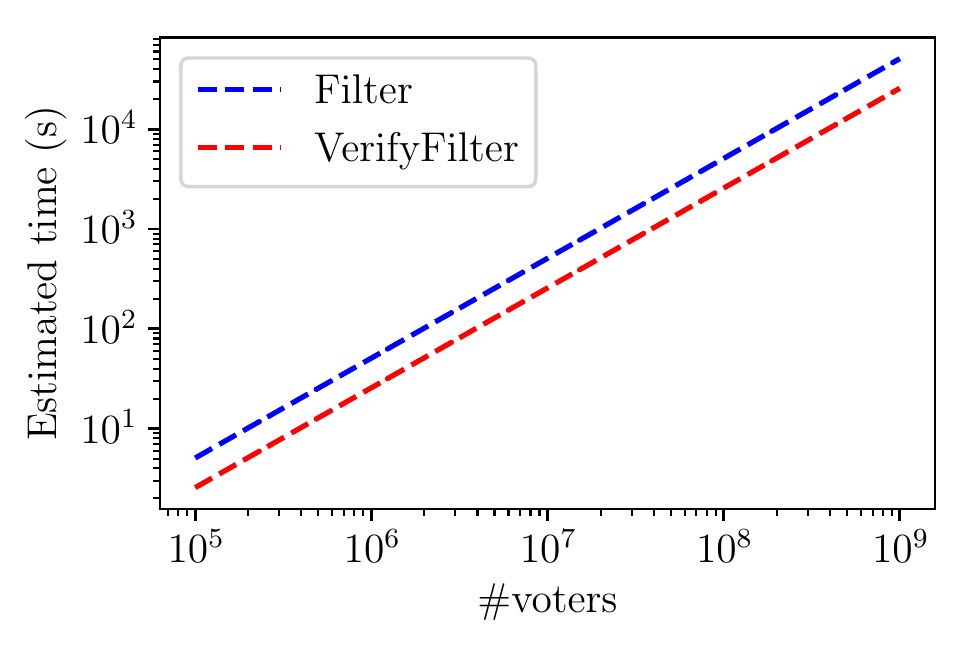}
  \includegraphics[width=0.33\textwidth]{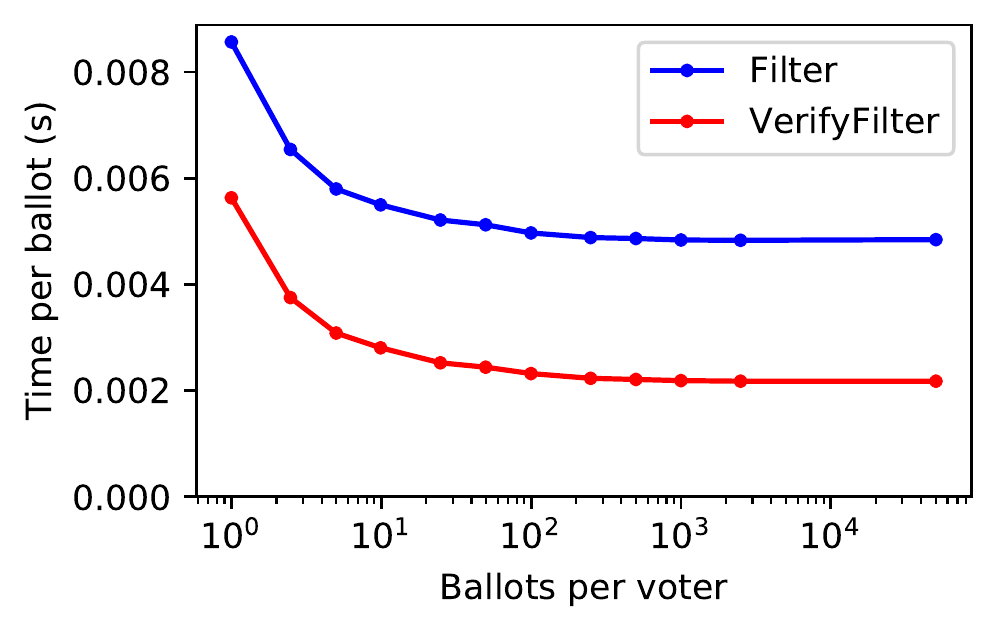}
  \vspace{-1cm}
  \caption{\label{fig:filter-tally-time} Cost of \filter, \tally,
  \verifyfilter and \verifytally: Measured cost on single core (left);
  estimated cost on 8 processor machine ($8\times28$ cores, center); and
  effect of different distributions of 50\,000 ballots (including
  dummies) among voters (right). Note that one ballot per voter causes
  the highest processing time.
}
\end{figure*}

We evaluate the performance of \name
using a Python prototype implementation of its core cryptographic operations. 
We did not implement the \gettoken protocol, but note that it can be implemented
easily and cheaply using standard cryptography. We also did not implement the
bulletin board as it is not core to our design.
We use the
\textsf{petlib}~\cite{Petlib} binding to
OpenSSL for the group operations using the fast NIST
P-256 curve. We ran all experiments in Linux on a single core of an Intel i3-8100
processor running at 3.60GHz. We
expect nation-wide elections to have much more processing power available. For
example, the Swiss CHVote system, which aims to support 8 million voters, has
around 32 cores available per party in the system. We also include
performance \emph{estimates} of running the system on a large machine with 8 Intel
Xeon Platinum 8280L processors, with 28 cores each, running at 2.7Ghz. 
As our scheme is almost completely parallelizable (only the hash functions for
the non-interactive zero-knowledge proofs need to be computed sequentially), we
estimate a 90\% parallelization gain: a speedup of 170 times when using the 8x28
cores with respect to the single core.

For all experiments we empirically select the best cover size $\coversizebase$ by sweeping over values from 1 to 64. In
the majority of cases the optimal $\coversizebase$ is in the range $[2,4]$.

\para{Creating a ballot.} We use an ElGamal ciphertext to
encrypt the voter's choice, and a Bayer and Groth \cite{10.1007/978-3-642-38348-9_38}
zero-knowledge proof of membership to show that the selected candidate is eligible.
Creating a ballot from 1000 eligible candidates costs 1.6 seconds, 
while verifying its correctness costs 0.24 seconds. The size of this proof is
1.5\,kB.

\para{Impact of revoting.} 
Figure~\ref{fig:dummies-overhead} shows the overhead, in terms of
number of dummies per real ballot depending on the number of votes. 
This overhead influences the computation time of 
shuffling and filtering in the tally phase. 
We consider different revoting behavior.
In the leftmost figure we model this behaviour as percentage of the number of voters: 50\% models that
half of the voters revoted once, and 200\% models that all voters
vote twice. We note that the overhead of 100\% voters revoting once is equivalent
to, for example, 25\% of the voters revoting 4 times.
As expected, the overhead increases with both
the number of voters and the number of revoted ballots.
However, even for 100 million voters revoting twice (200\% revotes), the
overhead is at most a factor of 32 (Figure~\ref{fig:dummies-overhead} left).

However, casting a vote takes time. Thus, revoting patterns are constrained
by the number of ballots that can be cast during an election. We consider
an election period of 24h (larger than most countries), and bound
how often a \emph{single} voter can vote (1 ballot per second, per ten seconds, and per minute).
As this limits the number of voters with a large amount of ballots,
we do not need large covers, reducing the overhead (see 
Figure~\ref{fig:dummies-overhead}, center). Similarly, assuming that 
all voters will revote is very conservative. In a normal election one expects 
the vast majority of voters to vote once. In Figure~\ref{fig:dummies-overhead}, right,
we show the overhead when the number of voters that
cast more than one vote is limited. As fewer voters revote, the total amount
of votes is smaller and so are the covers.

\para{Filtering.}  
We implemented a non-optimized version of Bayer-Groths verifiable shuffle 
protocol~\cite{Bayer2012} to implement steps~\ref{proc:shuffle} and ~\ref{proc:filter:second-shuffle} of 
Procedure~\ref{proc:filter}.
We measure the execution time of filtering and
verifying, when varying the number of voters.
Figure~\ref{fig:filter-tally-time} left shows the times to run \filter and \verifyfilter on a single core machine.
Figure~\ref{fig:filter-tally-time} middle shows the estimated processing times on the big 8 processor Xeon machine.
We estimate that the 8 processor machine can filter and tally the second round presidential election in Brazil (147 million registered voters) in 95 minutes if no voter revotes,
and within a day assuming 50\% extra ballots and at most one ballot per voter per ten seconds. 
We note that elections usually tally
ballots per state, city, or smaller electoral district. Thus, in general we expect the number of ballots to be much smaller. 
All ballot groups in Figure~\ref{fig:filter-tally-time} left and center have size one.
Figure~\ref{fig:filter-tally-time} right shows the effect of larger ballot groups resulting
from revoting and dummy voters. As the average group size
increases, the computation time goes down. Therefore,
Figure~\ref{fig:filter-tally-time} gives an upper bound on the processing time,
given a known cover size.

For comparison we computed a lower bound on the filter cost of Achenbach et
al.'s filter method by counting the number of group operations needed per
ballot. We used this number to compute the estimate in
Figure~\ref{fig:filter-tally-time} left. A small-town election with
100.000 ballots takes 8 core months to filter in their scheme. Even on the large
Xeon machine, an election with 1 million ballots takes over four months to complete.
Our method needs respectively 10 core minutes and 30 seconds. The sizes of the tally
proofs in \name for these examples are 54 and 501 MB respectively. 

\parait{Smaller regions.} Many countries report election results per region,
such as a province, a city, or a neighborhood. In those cases, results can be
computed per region at lower computation cost. However, even in this setting,
Achenbach et al.'s quadratic approach scales poorly. We note that the allowable
size of reporting regions depend on local regulations, with the smallest regions
likely being cities or neighborhoods, which can easily total 100.000s of
voters. As Figure~\ref{fig:filter-tally-time} (left) shows, even in this
configuration, the quadratic approach requires 3 to 4 orders of magnitude more
computation resources than \name.

\para{Tallying.} We also measured the execution time of a single step of the mix
network -- a single shuffle and one verifiable decryption -- using our
verifiable shuffle implementation. Our results show that one step is a factor of three times
faster than our filter protocol, e.g., mix-and-decrypting the 100.000
ballots takes around 3 core minutes and 1 million ballots takes 10 seconds on the Xeon
machine.

\section{Conclusion}
\label{conclusion}

Due to its complexity and cost, coercion resistance has been often overlooked in
remote voting schemes. We introduced \name, a revoting scheme that enables
cleartext filtering thanks to efficient deterministic padding. \name
does not require users to store cryptographic material, and can efficiently
handle millions of votes. We provided a new coercion resistance definition and
updated existing definitions for ballot privacy and verifiability to the
revoting setting. We have proven that \name satisfies all of them.

\bibliographystyle{plain}
\bibliography{open_filtering}

\appendix

\section{Proof of ballot privacy, strong correctness and strong consistency}
\label{appendixA}
\label{app:proof-ballot-secrecy}

\begin{proof}[Proof of \cref{thm:ballot-secrecy}]
  This proof is very similar to the proof of ballot privacy of Helios in the
  full version of Bernhard et al.~\cite{BernhardCGPW15}. We start with the adversary playing the ballot privacy game with $\gamebit = 0$ and after a sequence of game steps transitions, the adversary finishes playing the ballot privacy game with $\gamebit = 1$. We argue that each of these steps are indistinguishable, and therefore the results follows.

  The proof proceeds along the following sequence of games:
  \begin{description}
  \item[Game $G_0$.] Let game $G_{0}$ be the $\bprivexp{0}{\adv}{\mathcal{V}}$ game (see Figure~\ref{fig:bpriv-game} and Definition~\ref{def:bpriv}).
  \item[Game $G_1$.] Game $G_1$ is as in $G_0$ but we now compute
    \begin{equation*}
     \tallyproofdec_0 = \simproof(\BB_0, \result)
    \end{equation*}
    by simulating the proof using the random oracle instead of using the real proof from $\tally(\BB_0, \votesk).$  Because of the simulation properties of the zero-knowledge proof system, \adv cannot distinguish these two games.
  \item[Game $G_2$.] As in game $G_1$, but now $\oracletally(\selectedvotelist,
    \filteradditions)$ ignores $\selectedvotelist$ and $\filteradditions$ provided by $\adv$ when computing the result $\result$. In particular, $\oracletally$ now proceeds as follows: \\[2mm]
    \begin{tabular}{@{}ll@{}}
      \toprule
        \multicolumn{2}{@{}l}{$\oracletally(\selectedvotelist, \filteradditions)$} \\
        & If $\verifyfilter(\BB_{\gamebit}, \selectedvotelist, \filteradditions) = \bot$ return $\bot$ \\
        & $(\result, \tallyproofdec_0) \gets \tally(\BB_0 \parallel \filter(\BB_0, \lastballot', \tssk), \votesk)$ \\
        & $\BB_{\gamebit} \gets \BB_{\gamebit} \parallel \selectedvotelist \parallel \filteradditions$ \\
        & $\BB_{1 - \gamebit} \gets \BB_{1 - \gamebit} \parallel \filter(\BB_{1 - \gamebit}, \lastballot',  \ tssk)$ \\
        & $\tallyproofdec_0 = \simproof(\BB_0, \result)$ \\
        & $\tallyproofdec_1 = \simproof(\BB_1, \result)$ \\
        & return $(\result, \tallyproofdec_{\gamebit})$ \\
      \bottomrule
    \end{tabular}\\[3mm]
    The proofs included in $\filteradditions$ ensure that \adv honestly computed the filtering step. Therefore, the adversary's view is indistinguishable from that in $G_1$.
  \item[Game $G_3$.] As in game $G_2$, but in $\oracleboard$ we return $\BB_1$. Note that in $G_3$ the adversary has the same view as in the $\bprivexp{1}{\adv}{\mathcal{V}}$ game. All that is left to show is that $G_2$ and $G_3$ are indistinguishable.
  \end{description}

  We now show that no adversary $\adv$ can distinguish $G_2$ from $G_3$. Let $\nroraclevotes$ be the number of $\oraclevote(\votetoken, \candidate_0, \candidate_1)$ calls that the adversary $\adv$ made. In particular, for the $i$th call to $\oraclevote$, remember the tuple $(\ballot_0, \ballot_1, \candidate_0, \candidate_1)$ of candidates and resulting ballots. We now build a series of games $H_0, \ldots, H_{\nroraclevotes}$ and proceed by a hybrid argument.

  In game $H_i$ we show to the adversary a bulletin board where the first $i$ ballots cast using $\oraclevote$ on $\BB_0$ are replaced by those of $\BB_1$.
  More precisely, in all games $H_i$ we keep track of an additional bulletin board $\BB$ that is shown to the adversary, i.e., $\oracleboard$ now returns $\BB$. Whenever the adversary makes an $\oraclecast(\ballot)$ query, we also add $\ballot$ to $\BB$, i.e., $\BB \gets \BB \parallel \ballot$. In game $H_i$ in response to the first $i$ calls to $\oraclevote$, we additionally set $\BB \gets \BB \parallel \ballot_1$. For the remaining calls we additionally set $\BB \gets \BB \parallel \ballot_0$. Note that $H_0 = G_2$ and that $H_{\nroraclevotes} = G_3$.

  From Bernhard et al.'s work on Helios~\cite{Bernhard2012} we know that in the random oracle model under the DDH assumption the ballot encryption scheme based on ElGamal with a non-interactive proof of correct construction is NM-CPA secure, that is,
  \begin{equation*}
    \left| \textrm{Pr}\left[ \nmcpaexp{0}(\secpar) = 1 \right] -
           \textrm{Pr}\left[ \nmcpaexp{1}(\secpar) = 1 \right]
    \right|
  \end{equation*}
  is negligible in $\secpar$, where $\nmcpaexp{\gamebit}$ is as in Figure~\ref{fig:nm-cpa}. We reduce to the NM-CPA security of the encryption scheme to show that $H_i$ is indistinguishable from $H_{i - 1}$.
  
\begin{figure}[tbp]
  \centering
  \begin{tabular}{@{}ll@{}}
    \toprule
    \multicolumn{2}{@{}l}{$\nmcpaexp{\gamebit}(\secpar)$:} \\
    & $(\pk, \sk) \gets \votekeygensimple(1^{\secpar})$ \\
    & $m_0, m_1 \gets \adv(\textrm{``find''}, \pk)$ \\
    & $c^* = \voteenc(\pk, m_{\gamebit})$ \\
    & $\gamebit' \gets \adv^{\oracledec}(\textrm{``guess''}, \pk, c^*)$ \\
    & Output $\gamebit'$ \\[2mm]
    \multicolumn{2}{@{}l}{$\oracledec(\vec{c})$:} \\
    & If $c^* \in \vec{c}$ then return $\bot$ \\
    & $\vec{m}_i = \votedec(\sk, \vec{c}_i)$ \\
    & Return $\vec{m}$ \\
    \bottomrule
  \end{tabular}
  \caption{In the NM-CPA experiment $\bprivexp{\gamebit}{\adv}{\mathcal{V}}$, the adversary \adv finds two messages $m_0$ and $m_1$ of which it asks an encryption of the challenger. It is then allowed to ask the decryption of a vector of ciphertexts $\vec{c}$ of its decryption oracle $\oracledec$. It may only call this oracle once.}
  \label{fig:nm-cpa}
\end{figure}
  To show this, we create an adversary $\mathcal{B}$ against NM-CPA. Internally, $\mathcal{B}$ uses adversary \adv. Adversary $\mathcal{B}$ receives the public key $\pk$ from its challenger. At the start of the game $\mathcal{B}$ runs $\setup$ as normal, but instead it sets $\votepk = \pk$. It then answers the $j$th $\oraclevote(\votetoken, \candidate_0, \candidate_1)$ query as follows:
  \begin{itemize}
  \item For $j < i$ it sets $\BB \gets \BB \parallel \ballot_1$
  \item For $i = i$ it returns $\candidate_0, \candidate_1$ to the NM-CPA challenger to receive a challenge ciphertext $c^*$, and uses that ciphertext when running $\vote$ to obtain a ballot $\ballot^*$ and set $\BB \gets \BB \parallel \ballot^*$.
  \item For $j > i$ it sets $\BB \gets \BB \parallel \ballot_0$
  \end{itemize}
  Thereafter $\mathcal{B}$ answers the $\oracletally$ query as follows. It cannot directly compute the tally, as it does not know the decryption key $\votesk$.
  However, it knows $\tssk$ so it can recompute the $\filter(\BB_0, \lastballot', \tssk)$ to determine which ballots $\ballot_{i_1},\ldots, \ballot_{i_\nrballotgroups}$ on $\BB_0$ should be included in the final tally (recall that the result is always computed on $\BB_0$, and that as per $G_2$ we do not use $\filteradditions$ provided by the adversary). Then proceed as follows:
  Let $\Gamma = (\encryptedvote_{i_1}, \voteproof_{i_1}), \ldots, (\encryptedvote_{i_{\nrballotgroups}}, \voteproof_{i_{\nrballotgroups}})$ be the corresponding vote ciphertexts and proofs. Then, $\mathcal{B}$ computes the result $\result$ as follows:
  \begin{itemize}
  \item If $c^* \in \Gamma$, then the ballot for candidate $\candidate = \candidate_0$ in query $i$ should be included in the tally as well. Recall that the tally always is computed over $\BB_0$, therefore, $\mathcal{B}$ sets
    $(\candidate_{i_1}, \ldots, \candidate_{i_{\nrballotgroups - 1}}) = \oracledec(\Gamma \setminus \{c^*\})$ and sets
    \begin{equation*}
      \result = \partialtally(\candidate_{i_1}, \ldots, \candidate_{i_{\nrballotgroups - 1}}, \candidate_0).
    \end{equation*}
  \item Otherwise, $\mathcal{B}$ sets $(\candidate_{i_1}, \ldots, \candidate_{i_{\nrballotgroups}}) = \oracledec(\Gamma \setminus \{c^*\})$ and sets
    \begin{equation*}
      \result = \partialtally(\candidate_{i_1}, \ldots, \candidate_{i_{\nrballotgroups}}).
    \end{equation*}
  \end{itemize}
  Finally, as in game $G_2$, $\mathcal{B}$ simulates the tally proof. Note that if $\gamebit = 0$ in $\mathcal{B}$'s NM-CPA game, then $\mathcal{B}$ perfectly simulates $H_{i - 1}$, and if $\gamebit = 1$ then it perfectly simulates $H_{i}$. Therefore, any distinguisher between $H_{i}$ and $H_{i - 1}$ breaks the NM-CPA security of the voting scheme.

  A standard hybrid argument now shows that $H_0 = G_2$ is indistinguishable from $H_{\nroraclevotes} = G_3$. This completes the proof.
\end{proof}

\subsection{Strong Consistency}
The ballot privacy definition ensures that ballots and the proof of correct
tally $\tallyproofdec$ do not leak anything about how voters voted. However,
maliciously crafted voting schemes might leak information about honest votes in
the result $\result$ itself. To ensure that this is not possible, Bernhard et
al.~\cite{Bernhard:2015:SCA:2867539.2867668} introduced the notion of strong
consistency. Intuitively, this notion ensures that the result $\result$ is equal to the result function applied directly to the valid ballots (skipping the filter and tally phase). We follow the exposition of Bernhard et al., but make some changes to account for the fact that our scheme selects ballots with the highest corresponding ballot number $\ballotnumber$, rather than simply the last per voter.

Our voting scheme depends on a formal result function
$\resultfunction:((\vidspace\times\mathbb{N})\times\candidatelist)^*\rightarrow
\resultspace$, where $\vidspace$ is the space of voters identifiers, and
$\resultspace$ is the result space. Our result function selects, for every $\vid \in \vidspace$, the ballot $((\vid, \ballotnumber), \candidate)$ where $\ballotnumber$ is the maximal counter for this voter. Then it counts the number of votes per candidate $\candidate$ in the selected ballots and returns the result.

To model that the result $\result$ output by \tally is consistent with the result function $\resultfunction$, we require the existence of an extraction algorithm $\extractoralgorithm$ that takes as input the TS's key $\tssk$, the trustee key $\votesk$ and a ballot, and outputs a tuple $((\vid, \ballotnumber), \candidate) \in ((\vidspace\times\mathbb{N})\times\candidatelist)$ with the corresponding voter identifier $\vid$, ballot number $\ballotnumber$ and candidate $\candidate$ in this ballot. If it fails to extract these values, it outputs $\bot$.

Moreover, we require a method \validindividual that validates ballots
independent of the bulletin board. The function $\validindividual$ takes as input the election public key $\pk$ and a ballot, and outputs $\top$ if the ballot is valid, and $\bot$ otherwise.

\begin{definition}[Adapted from Bernhard et al.~\cite{Bernhard:2015:SCA:2867539.2867668}]\label{def:strong-consistency}
  A voting scheme $\votingscheme = (\setup,\allowbreak \gettoken,\allowbreak \vote,\allowbreak \valid,\allowbreak \filter,\allowbreak \verifyfilter,\allowbreak \tally, \verify)$ for an electoral roll $\electoralroll$ and candidate list $\candidatelist$ has \textit{strong consistency} with respect to a result function $\resultfunction:((\vidspace\times\mathbb{N})\times\candidatelist)$
  if there exists algorithms $\extractoralgorithm$ and $\validindividual$ as above, such that the following three conditions hold:
  \begin{enumerate}
  \item \label{def:extractor}
  For any $(\pk, \pask, \tssk, \votesk)$ output by $\setup$, for all voters $i \in \electoralroll$ with voter identifier $\vid$, for all $\votetoken\gets\gettoken(i)$ where $\ballotnumber$ is the corresponding ballot number, and for any ballot $\ballot\gets\vote(\votetoken, \candidate)$ with $\candidate\in\candidatelist$,
  we have that $\extractoralgorithm(\votesk, \pask, \ballot) = ((\vid, \ballotnumber), \candidate).$
  \item \label{def:individual-validator}
  For any $(\BB, \ballot) \gets \adv()$ we have that $\valid(\BB, \ballot) = \top$ implies $\validindividual(\ballot) = \top$.
  \item \label{def:scons-game-def}
  For all probabilistic polynomial time adversary $\adv$ we have that
  \begin{equation*}
  \textrm{Pr}\left[\sconsexp(\secpar, \electoralroll, \candidatelist) = 1 \right]
  \end{equation*}
  is a negligible function in $\secpar$ (see Figure~\ref{fig:scons} for the game).
  \end{enumerate}
\end{definition}

The first condition ensures that $\extractoralgorithm$ can extract $((\vid, \ballotnumber), \candidate)$ correctly for honestly created ballots. The second condition ensures that ballots that are accepted by $\valid$ with respect to the board $\BB$ must also be accepted by $\validindividual$. Finally, the third condition ensures that the adversary cannot produce bulletin boards where the result $\result$ does not correspond to the formal result function $\resultfunction$ executed on the individual ballots. (The adversary loses if \filter or \tally aborts because of an invalid bulletin board.)

\begin{figure}[tbp]
  \centering
  \begin{tabular}{@{}ll@{}}
    \toprule
    \multicolumn{2}{@{}l}{$\sconsexp(\secpar, \electoralroll, \candidatelist)$:} \\
    & $(\pk, \pask, \tssk, \votesk) \gets \setup(1^{\secpar}, \electoralroll, \candidatelist)$ \\
    & $\BB = [\ballot_1, \ldots, \ballot_{\lastballot}] \gets \adv(\pk, \pask)$ \\
    & If $\exists \ballot_i$ s.t. $\validindividual(\ballot_i) = \bot$ then
      return 0 \\
    & Let $\selectedvotelist, \filteradditions \gets \filter(\BB, \lastballot', \tssk)$ \\
    & Let $(\result, \tallyproofdec) \gets \tally(\BB \parallel \selectedvotelist \parallel \filteradditions, \votesk)$ \\
    & If $\result = \bot$ return 0 \\
    & If $\result\neq\resultfunction(\extractoralgorithm(\votesk, \pask, \ballot_1), \ldots,\extractoralgorithm(\votesk, \pask, \ballot_{\lastballot})) $ \\
    & return 1, else return 0. \\
    \bottomrule
  \end{tabular}
  \caption{In the strong-consistency experiment
    $\sconsexp$, adversary \adv must output a board $\BB$ with ballots that are not tallied correctly given $\extractoralgorithm$.}
  \label{fig:scons}
\end{figure}

\subsection{Strong correctness}
Finally,
a malicious protocol designer might modify which ballots are accepted based on
earlier ballots.
To address this attack, Bernhard et al.~\cite{Bernhard:2015:SCA:2867539.2867668} introduce the notion of strong correctness. Informally, a scheme has strong correctness if honestly generated ballots are accepted regardless of the content of the bulletin board.

\begin{figure}[tbp]
  \centering
  \begin{tabular}{@{}ll@{}}
    \toprule
    \multicolumn{2}{@{}l}{$\scorrexp(\secpar)$:} \\
    & $(\pk, \pask, \tssk, \votesk) \gets \setup(1^{\secpar})$ \\
    & $(i, \votetoken, \candidate, \BB) \gets \adv(\pk, \pask)$ \\
    & Let $\ballot = \vote(\votetoken, \candidate)$ \\
    & If $\vote$ aborts because $\votetoken$ is invalid, return $\top$ \\
    & Else return $\valid(\BB, \ballot)$  \\
    \bottomrule
  \end{tabular}
  \caption{In the strong-correctness experiment
    $\scorrexp$, the adversary outputs a board $\BB$ such that an honest ballot by a voter of its choice is not valid.}
  \label{fig:scorr}
\end{figure}
\begin{definition}[Adapted from Bernhard et al.~\cite{Bernhard:2015:SCA:2867539.2867668}]\label{def:strong-correctness}
Consider a voting scheme $\votingscheme = (\setup, \gettoken, \vote, \filter, \verifyfilter, \tally, \verify)$ for an electoral roll $\electoralroll$ and candidate list $\candidatelist$. We say the scheme has \emph{strong correctness} if
\begin{equation*}
\textrm{Pr}\left[\scorrexp(\secpar) = \bot\right]
\end{equation*}
is a negligible function in $\secpar$ (see Figure~\ref{fig:scorr} for the game).
\end{definition}

\begin{theorem}\label{thm:strong-consistency-correctness}
\name provides strong-consistency and strong-correctness.
\end{theorem}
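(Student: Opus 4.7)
The plan is to prove the two properties separately, since their arguments share little structure.

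For strong consistency, I will first define the extractor and validator algorithms. Set $\extractoralgorithm(\tssk, \votesk, \ballot)$ to parse $\ballot = (\encryptedvote, \voteproof, \pk, \encryptedvid, \encryptedindex, \tokensignature, \signature)$ and return $((\ecdec(\tssk, \encryptedvid),\allowbreak \ecdec(\tssk, \encryptedindex)), \votedec(\votesk, \encryptedvote))$; and set $\validindividual(\ballot)$ to be exactly the board-independent checks of step~\ref{proc:valid:checks} of \valid, namely verification of $\voteproof$, of $\signature$ under $\pk$, and of $\tokensignature$ under $\papk$. Condition~\ref{def:extractor} then follows from correctness of ElGamal and of the vote encryption scheme on honestly generated ballots. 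Condition~\ref{def:individual-validator} is immediate because \valid performs the $\validindividual$ checks before its additional board-dependent duplicate checks.

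For the game-based condition~\ref{def:scons-game-def}, I will argue that whenever \filter and \tally produce a non-$\bot$ result on an adversarial board $\BB$, that result equals $\resultfunction$ applied to the extracted triples. Step~\ref{proc:filter:checks} of \filter guarantees every ballot passes $\validindividual$ and has a distinct $\pk$. The verifiable shuffle proof $\shuffleproof$ (by shuffle soundness) preserves the multiset of stripped ballots; the decryption proofs $\decryptproof_i$ (by Schnorr-NIZK soundness in the random oracle model) correctly expose the $(\vid, \ballotnumber)$ pairs and thus yield the correct grouping. The disjunctive reencryption proof $\reencryptionproof_j$ is the crux: by soundness, for each group either the selected ciphertext $\preselectedvote{j}$ reencrypts the ballot with tag decrypting to $\generator^0$ and maximal index (i.e., the last real ballot of a real voter), or the entire tag product decrypts to $\generator^{\nrballotsingroup_j}$ (the group consists entirely of dummy-tagged ballots) and $\preselectedvote{j}$ is a zero-encryption. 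Together with the second shuffle $\shuffleproof'$ and the opened randomizers $\openedrandomizers$, the published $\selectedvotelist$ is exactly the multiset of reencryptions of last real ballots for real voters. Finally, \mixdecprotocol preserves the multiset of encrypted candidates and decrypts correctly, so the tally matches $\resultfunction$, which likewise selects the max-index ballot per $\vid$.

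For strong correctness, I will show that on an honestly produced ballot every check of $\valid(\BB, \ballot)$ succeeds with overwhelming probability irrespective of $\BB$. The checks in step~\ref{proc:valid:checks} succeed by correctness of $\voteenc$, $\signsign$, and the honest generation of $\tokensignature$. The duplicate check on $\encryptedvote$ fails only if the fresh ElGamal randomizer used inside $\voteenc$ collides with a randomizer underlying a ciphertext on $\BB$; since the randomizer is uniform in $\Zp$, this probability is bounded by $|\BB|/\grouporder$, negligible in $\secpar$. The duplicate check on $\pk$ fails only if the ephemeral key sampled by $\signkeygen$ inside $\gettoken$ collides with a key appearing on $\BB$; since $\pk$ is freshly sampled inside the honest token generation and has min-entropy $\Omega(\secpar)$, this too is negligible.

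The main obstacle is the disjunctive reencryption step inside strong consistency: correctness rests on carefully combining (i) soundness of the OR-proof $\reencryptionproof_j$, (ii) the fact that the tag ciphertexts are deterministic ElGamal encodings so $\ecdec(\tssk, \cdot)$ evaluates to $\generator^0$ or $\generator^{\nrballotsingroup_j}$ exactly when the TS claims, and (iii) the uniqueness check on $(\decryptedvid{i}, \decryptedindex{i})$ in step~\ref{proc:filter:decrypt-vid-index} which prevents collision between real and injected dummy ballots under the same $\vid$. Once this is handled, the remaining pieces are standard aggregations of NIZK soundness, shuffle soundness, and encryption correctness.
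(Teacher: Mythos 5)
Your proposal is correct and matches the paper's proof in all essentials: the same extractor and validator (decrypt the identifier, index, and vote with the TS and trustee keys; reuse the board-independent checks of step~1 of \valid), the same one-line arguments for conditions~1 and~2, and the same freshness argument for strong correctness (fresh ElGamal randomness in $\encryptedvote$ and a freshly sampled ephemeral $\pk$ make the duplicate checks fail only with negligible probability). The one place you diverge is condition~3: you argue via \emph{soundness} of the shuffle, decryption, and reencryption proofs, as though $\filteradditions$ were adversarially produced. In the strong-consistency game the adversary supplies only the board $\BB$; the challenger itself runs \filter and \tally honestly, so the paper needs only the \emph{correctness} of these honest algorithms, plus the uniqueness check on the pairs $(\decryptedvid{i}, \decryptedindex{i})$, to conclude that the selected ciphertext in each group is the max-index real ballot --- exactly what $\resultfunction$ prescribes. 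Your soundness-based route still reaches the right conclusion (soundness of accepting proofs implies the claimed relations hold), but it does more work than the game requires; the observation that the TS is honest in this experiment is the shortcut the paper takes. Neither difference is a gap.
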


\begin{proof}[Proof of \cref{thm:strong-consistency-correctness}]
This proof roughly follows that of the strong consistency and strong correctness
of Helios in the full version of Bernhard et al.~\cite{BernhardCGPW15}. To show that \name is strongly consistent, we define the following $\extractoralgorithm$ and $\validindividual$ algorithms:
\begin{itemize}
\item $\extractoralgorithm(\ballot, \tssk, \votesk)$ operates on a ballot
  $\ballot = (\encryptedvote, \voteproof, \pk, \encryptedvid, \encryptedindex, \tokensignature, \signature)$. First, it verifies the proof $\voteproof$, and the signatures $\tokensignature$ and $\signature$ as in step~\ref{proc:valid:checks} of \valid in \cref{proc:valid}. If any check fails, it returns $\bot$. Otherwise, it recovers the candidate $\candidate = \votedec(\votesk, \encryptedvote)$ (note that $\candidate \in \candidatelist$ because $\voteproof$ is valid). Then it decrypts $\encryptedvid$ and $\encryptedindex$ to get $(\vid, \ballotnumber)=(\ecdec(\pask, \encryptedvid), \ecdec(\pask, \encryptedindex))$. It returns $((\vid,\ballotnumber), \candidate)$.
\item $\validindividual(\ballot)$ proceeds as in step~\ref{proc:valid:checks} of \valid in \cref{proc:valid} to verify the ballot:
\begin{align*}
      & \voteverify(\votepk, \encryptedvote, \voteproof) \\
      & \signverify(\pk, \signature, \encryptedvote \parallel \voteproof \parallel \pk \parallel \encryptedvid \parallel \encryptedindex \parallel \tokensignature)\\
      & \signverify(\papk, \tokensignature, \pk \parallel \encryptedvid \parallel \encryptedindex).
      \end{align*}
It returns $\top$ if all are valid, and $\bot$ otherwise.
\end{itemize}

First we show that the first condition of strong consistency is satisfied. By the correctness of the zero-knowledge proofs and decryption algorithms, \extractoralgorithm will indeed extract the required values for valid ballots.

Since \validindividual executes a strict subset of the checks in \valid, it follows that the second condition is trivially satisfied.

For the third condition, we need to show that the adversary cannot create a
valid bulletin board \BB (i.e., one on which \filter and \tally do not fail),
but where the result is incorrect (respect to the output calculated with the extractor
function).

Note that by the checks in steps \ref{proc:filter:checks} and
\ref{proc:filter:decrypt-vid-index} of \cref{proc:filter}, we know that the identifier pairs 
$(\decryptedvid{}, \decryptedindex{})$ are unique. Consider the group
$\ballotgroup_j$ of ballots corresponding to $\decryptedvid{j}$. The ideal result
function $\resultfunction$ includes the vote where the ballot index is highest.
In exactly the same way, \filter sets $\preselectedvote{j}$ to
$\encryptedvote_{j*}$ where the index $j*$ maximizes the ballot index
$\decryptedindex{j*}$. The equivalence of the ideal result and the result
produced by tally now follows.

To show that \name is strongly correct we need to prove that an adversary cannot
create a ballot box $\BB$ such that an honest voter, when generating an honest ballot
$\ballot$, that ballot will be rejected, i.e., $\valid(\BB, \ballot) =
\bot$. Note that the verification in $\valid(\ballot)$ is twofold.
First, it verifies the validity of the ballot. It is trivial to see that
this check passes for an honestly generated ballot. Second, it checks that the
ephemeral public key $\pk$ and encrypted vote $\encryptedvote$ do not yet appear
on the bulletin board. Clearly, $\encryptedvote$ does not appear because it was
just generated honestly by the user. Moreover, neither does the public key $\pk$
appear before, because it was just freshly generated by the PA. Given that these 
two values contain a source of randomness when generated, it proceeds that \adv
can only win with negligible probability.
\end{proof}

\section{Proof of Coercion Resistance}
\label{app:proof-coercion-resistance}

\begin{proof}[Proof of \cref{thm:coercion-resistance}]
  We first specify how to construct $\simproof$ and $\simfilter$. As in the
  ballot privacy proof, $\simproof(\BB, \result)$ simply simulates the proof of
  shuffle and the proof of
  correct decryption in $\tally$, so that regardless of the values in
  $\selectedvotelist$, $\result$ is the correct outcome.

  The algorithm $\simfilter(\BB, \lastballot', \result)$ proceeds similarly. It takes as input
  the bulletin board $\BB$, which it uses to determine the number of ballots $\lastballot$,
  the number of registrations $\lastballot'$, and the result $\result$.
  Moreover, it derives the number of real voters $\lastvoter$ using $\result$.
  It uses these data to compute the cover, and it adds the correct number of
  dummy ballots (for these, it sets $\encryptedvid$ and $\encryptedindex$ to
  random ciphertexts) to obtain $\strippedballotlist$. Then it computes a list
  of zero ciphertexts (encryptions of zero) of equal length, and simulates the
  shuffle proof $\shuffleproof$.
  It then generates fake voter identifiers $\decryptedvid{}$ and
  $\decryptedindex{}$ corresponding to the cover it computed earlier, associates
  these to shuffled ballot $\ballot_i$, and simulates the proofs
  $\decryptproof_i$. Next, for each resulting
  group, it generates a random encryption of zero $\preselectedvote{j} = \votezeroenc(\votepk, r_j)$ and 
  simulates the corresponding proof $\reencryptionproof_j$. Then, it returns the 
  randomness $r_j$ and the indices of the dummy voters corresponding to the 
  cover it computed early. Finally, for each remaining vote, 
  it generates a random $\selectedvote{j}$ and 
  simulates the shuffle proof $\shuffleproof'$. 
  
  In this proof, we will step by step replace all the ciphertexts that depend on
  the bit $\gamebit$ by random ciphertexts. In particular, we first show that the
  adversary learns nothing about $\gamebit$ during the election phase. We then
  show that it also learns nothing about $\gamebit$ during the tally phase. The
  result follows.

  \begin{enumerate}[label = {Game $G_{\arabic*}$.}, ref = {$G_{\arabic*}$}, labelwidth=\widthof{Game $G_{9}$},leftmargin=\parindent,labelindent=0pt,itemindent=!]
  \item \label{cr:game:base}
    Game \ref{cr:game:base} is as the $\crexp{\gamebit}{\adv}{\mathcal{V}}(\secpar, \electoralroll, \candidatelist)$ experiment. (Note that contrary to the proof of ballot privacy we do not fix the value for $\gamebit$.)
  \item \label{cr:game:direct-tally}
    Game \ref{cr:game:direct-tally} is as game \ref{cr:game:base}, but we compute the result directly based on the ballots on $\BB_0$. Let $[\ballot_1, \ldots, \ballot_{\lastballot}]$ be the list of ballots where
    \begin{equation*}
    \ballot_i = (\encryptedvote_i, \voteproof_i, \pk_i, \encryptedvid_i, \encryptedindex_i, \tokensignature_i, \signature_i ).
    \end{equation*}
    Let $\candidate_i = \votedec(\sk, \encryptedvote_i)$, $\vid_i = \ecdec(\tssk, \encryptedvid_i)$, and $\ballotnumber_i = \ecdec(\tssk, \encryptedindex_i)$. Then compute the result:
    \begin{equation*}
      \result = \rho( ((\vid_1, \ballotnumber_1), \candidate_1), \ldots, ((\vid_{\lastballot}, \ballotnumber_{\lastballot}), \candidate_{\lastballot}))
    \end{equation*}
    As per strong consistency, games~\ref{cr:game:direct-tally} and~\ref{cr:game:base} are indistinguishable.
  \item \label{cr:game:simulated-proofs}
    Game \ref{cr:game:simulated-proofs} is as game \ref{cr:game:direct-tally},
    but with all the zero-knowledge proofs replaced by simulations.
    This includes the shuffle proof $\shuffleproof$, the decryption proofs of $\decryptproof_i$ of the shuffled $\encryptedvid_i'$ and $\encryptedindex_i'$s, the reencryption proofs $\reencryptionproof_i$, and the shuffle proof $\shuffleproof'$ produced in \filter; as well as the tally proof $\tallyproofdec_0$ which we replace by the output of $\simproof(\BB_0, \result)$. We use the random oracle to simulate this step, which is indistinguishable by the simulatability of the zero-knowledge proof system.
  \item \label{cr:game:no-label-decrypt}
    Game \ref{cr:game:no-label-decrypt} is as game
    \ref{cr:game:simulated-proofs} but we do not decrypt the $\encryptedvid_i$
    and $\encryptedindex_i$ anymore when running \filter. Instead, we proceed as
    follows. All ballots $\ballot_i = (\encryptedvote_i, \voteproof_i, \pk_i,
    \encryptedvid_i, \encryptedindex_i, \tokensignature_i, \signature_i )$ on
    the bulletin boards are valid. Hence, $\signature_i$ is a valid signature by
    $\PA_{0}$ resp. $\PA_{1}$ on $\encryptedvid_i$ and $\encryptedindex_i$.
    Since the signature scheme is unforgeable, we know these ciphertexts were
    created by $\PA_{0}$ resp. $\PA_1$. Hence, we can associate to them the
    corresponding plaintexts $\vid_i$ and $\ballotnumber_i$. Moreover, we know
    the permutation used by the TS during \filter, so we can also provide the
    correct plaintexts in step \ref{proc:filter:decrypt-vid-index} of \filter
    on $\BB_0$ (recall the proofs of decryption $\decryptproof_i$ are already 
    simulated).
  \item \label{cr:game:labels-random}
    Game \ref{cr:game:labels-random} is as game~\ref{cr:game:no-label-decrypt},
    but we replace the ciphertexts $\encryptedvid_i$ and $\encryptedindex_i$ in
    the token $\votetoken_i$ by random ciphertexts for all tokens. Similarly, we replace the $\encryptedvid_i$ and $\encryptedindex_i$ ciphertexts for the dummy ballots by random ciphertexts. Note that per the change in game~\ref{cr:game:no-label-decrypt} we still associate the correct plaintexts $\vid_i$ and $\ballotnumber_i$ in the \filter protocol. A hybrid argument with reductions to the CPA security of the ElGamal encryption scheme shows that games \ref{cr:game:labels-random} and \ref{cr:game:no-label-decrypt} are indistinguishable.
    This reduction is possible since we no longer need to decrypt these ciphertexts.
  \item \label{cr:game:zero-votes}
    Game \ref{cr:game:zero-votes} is as game~\ref{cr:game:labels-random}, but we replace the encrypted votes $\encryptedvote_i$ in the $\oraclevote()$ call by encryptions of the zero vector, i.e., $\encryptedvote_i = \votezeroenc(\votepk, r)$ for a uniformly random randomizer $r$. As in the ballot privacy proof, a hybrid argument with a reduction to the NM-CPA security of the ElGamal encryption scheme with zero-knowledge proof shows that games \ref{cr:game:zero-votes} and \ref{cr:game:labels-random} are indistinguishable. Note that in this reduction we use the $\oracledec$ of the NM-CPA challenger to decrypt votes in the adversary-determined ballots before computing the result $\result$.
  \end{enumerate}

  Note that as of game~\ref{cr:game:zero-votes}, the adversary's view of the bulletin board before calling $\oracletally()$ is independent of the value of $\gamebit$. (The ballots resulting from the $\oraclevote$ call also contain a random ephemeral public key $\pk$ and the signatures $\tokensignature$ and $\signature$, but these are also independent of the actual voter selected.)

  We now proceed to show that the adversary also cannot learn anything from the
  output of \filter. Notice that, regardless of the value of $\gamebit$, the
  filter step is computed with the same number of voters $\nrcastvoters$, the same number of
  ballots $\lastballot$ and the same number of obtained tokens $\lastballot'$. Therefore, the output of $\filter$ applied to $\BB_0$ and that of
  $\simfilter$ applied to $\BB_1$ should be indistinguishable.
  In the following game steps we replace the ciphertexts after shuffling by
  zero-ciphertexts and show that these steps are indistinguishable for the
  adversary.

  \begin{enumerate}[resume, label = {Game $G_{\arabic*}$.}, ref = {$G_{\arabic*}$}, labelwidth=\widthof{Game $G_{9}$},leftmargin=\parindent,labelindent=0pt,itemindent=!]
  \item \label{cr:game:zero-after-shuffle}
    Game \ref{cr:game:zero-after-shuffle} is the same as game~\ref{cr:game:zero-votes}, but we replace the ciphertexts
    $\encryptedvid_i', \encryptedindex_i'$ and $\tstag_i'$ after shuffling by random encryptions of zero. We proceed as if they still decrypt to the correct values. Note that we already simulate the shuffle proof and decryption proofs. Again, a hybrid argument with reductions to the CPA security of the ElGamal encryption scheme shows that the games \ref{cr:game:zero-after-shuffle} and \ref{cr:game:zero-votes} are indistinguishable.
    This reduction is possible since we no longer need to decrypt these ciphertexts.
  \item \label{cr:game:zero-votes-after-shuffle}
    Game \ref{cr:game:zero-votes-after-shuffle} is the same as game~\ref{cr:game:zero-after-shuffle}, but we replace the shuffled encrypted votes $\encryptedvote_i'$ by random encryptions of zero. Similarly, we replace the randomizations, $\openedrandomizers$, of the votes corresponding to dummy voters by the corresponding new randomization. This causes the pre-selected votes $\preselectedvote{j}$ per group to be incorrect, but this does not matter as we simulate the second shuffle proof, $\shuffleproof'$, anyway. As before, the indistinguishability of this step follows from the NM-CPA security of the vote encryption scheme.
  \item \label{cr:game:zero-votes-after-second-shuffle}
  Game \ref{cr:game:zero-votes-after-second-shuffle} is the same as game~\ref{cr:game:zero-votes-after-shuffle}, but we replace the second shuffled votes $\selectedvote{j}$ by random encryption of zero. This causes the selected votes $\selectedvote{j}$ after the shuffle to be incorrect with respect to the result, but this does not matter as we simulate the proof of the tally. As before, the indistinguishability of this step follows from the NM-CPA security of the vote encryption scheme.
  \item \label{cr:game:sim-all}
    Game \ref{cr:game:sim-all} is as game~\ref{cr:game:zero-votes-after-second-shuffle}, but we replace the filter and tally proofs on $\BB_0$ by simulations: we set
    $(\selectedvotelist_0, \filteradditions_0) \gets \simfilter(\BB_0, \lastballot', \result)$
    and $\tallyproofdec_0 \gets \simproof(\BB_0, \result)$. Note that this
    difference is purely syntactic, as per the changes we made before, we
    already computed exactly the output of \simfilter on $\BB_0$ and the result
    $\result$.
  \end{enumerate}

  Clearly the resulting view is independent of $\gamebit$. And coercion
  resistance follows.
\end{proof}

\section{Proof of Verifiability}
\label{app:proof-verifiability}
\begin{proof}[Proof of \cref{thm:verifiability}]
  At the end of the filter procedure, the \TS (or in our case, the adversary)
  outputs a list of selected votes $\selectedvotelist = \selectedvote{1},
  \ldots, \selectedvote{\lastvoter}$, a proof $\filteradditions$, the
  result $\result$ and the tally proof $\tallyproofdec$.
  Because $\tallyproofdec$ verifies, we know that the result $\result$ is the addition of the votes contained in $\selectedvote{1}, \ldots, \selectedvote{\lastvoter}$.

  We first show that each encrypted vote $\selectedvote{j}$ contains a vote for a single candidate or an empty vote. Given that $\shuffleproof$ and $\shuffleproof'$ validate, we know that the vote $\selectedvote{j}$ corresponds to a group of ballot indices $\ballotgroup_j = {i_1, \ldots, i_{\nrballotsingroup_j}},$ corresponding to voter identifier $\vid_j$. Moreover, the index $j*$ is such that the decrypted index $\decryptedindex{j*}$ is maximal. Because $\reencryptionproof_j$ is valid, we know that either
  \begin{enumerate}
  \item $\selectedvote{j}$ is the reencryption of vote $\encryptedvote_{j*}$ and $\ecdec( \tstag_{j*}) = g^0$,
  \item $\selectedvote{j}$ is the encryption of zero and $\ecdec( \prod_{k = 1}^{\nrballotsingroup_j} \tstag_{i_k}) = g^{\nrballotsingroup_j}$.
  \end{enumerate}
  We now show that in the first case $\encryptedvote_{j*}$ must be the encryption of a single candidate. Because $\tstag_{j*} = g^0$ and the correctness of the shuffle proofs $\shuffleproof, \shuffleproof'$, we know that $\encryptedvote_{j*}$ originates from a valid ballot cast by a voter. This ballot included a proof that $\encryptedvote_{j*}$ is the encryption of a single candidate.

  Let $\nrcastvoters$ be the number of voters that requested a voting token and for which at least one ballot is included on the bulletin board. We argue that the number of non-zero ballots that is included in the tally equals $\nrcastvoters$. Let $\vid_{i_1}, \ldots, \vid_{i_\nrcastvoters}$ be the corresponding voter identifiers. Because of the correctness of the shuffle, $\shuffleproof$, there exists corresponding groups $G_{i_1}, \ldots, G_{i_\nrcastvoters}$ to these voter identifiers after shuffling.

  We show that any other group $G_\xi$ contributes an empty vote to the tally. Let $\vid_\xi$ be the corresponding voter identity. The adversary cannot forge signatures by the PA, so any ballot with voter identity $\vid_\xi$ was added as a dummy ballot with $\tstag_D$ as a tag. Therefore, in group $G_\xi$, each tag encrypts $g^1$, so the only possible disjunct in $\reencryptionproof_\xi$ is therefore the second, and thus $\preselectedvote{\xi}$ is the encryption of zero. 
  
  We show that only such encrypted votes may be removed after the second shuffle. 
  The adversary needs to find a random $r$ such that $\selectedvote{i} = \votezeroenc(\votepk, r)$. Given that the DL-assumption holds, the adversary can only find
  such $r$ if the underlying plaintext is zero with very high probability. Given that the 
  encryption of candidate zero is not a permitted option for real voters, and given the 
  correctness of $\voteproof$, only votes corresponding to the above groups 
  may be removed by the adversary after the second shuffle. 

  So, we now know that only the groups $G_{i_1}, \ldots, G_{i_\nrcastvoters}$ each contribute exactly one candidate to the tally, and no more candidates are added by the other groups. We assign each group to one of the three groups in the game: the voters in $\textsf{Checked}$, the voters in $\textsf{Unchecked}$, or the voters in $\textsf{Corrupted}$. The result then follows.

  We now show that the correct values are tallied for each of the voters in $\textsf{Checked}$ that verified that their ballots were correctly cast. Consider a voter $i \in \textsf{Checked}$ with voter identifier $\vid_i$. Let $\ctr$ be the last ballot that it verified. We need to show that the tally includes either $i$'s ballot $\ctr$, or a later ballot. We know ballot $\ctr$ was added to the bulletin board. Therefore, the corresponding group $G_i$ (matching voter identifier $\vid_i$) containing $\nrballotsingroup_i$ ballots, must contain a shuffled ballot $(\encryptedvote_j', \vid_i, \decryptedindex{j}, \tstag_j')$ corresponding to the original ballot $\ctr$ (because the shuffle proof and decryption proofs are valid). Note that $\tstag_j'$ must be a decryption of $g^0$ by construction, therefore the tags in group $G_j$ (which must be encryptions of $g^0$ or $g^1$) can never sum to $g^{\nrballotsingroup_i}$ and therefore, we must take the first disjunct in the reencryption proof $\reencryptionproof_i$: $\preselectedvote{i}$ must be the reencryption of an encrypted ballot $j*$ where $\tstag_{j*}$ decrypts to $g^0$. Therefore, $\reencryptionproof_i$ must contain the encrypted vote corresponding to a real ballot cast by voter $i$. Finally, since $j*$ maximizes $\decryptedindex{j*}$ in the group, we know in particular, that $\decryptedindex{j*} \geq \decryptedindex{j}$ corresponding to the verified ballot. Therefore, we conclude that indeed $\preselectedvote{i}$ reencrypts either ballot $\ctr$ by voter $i$, or a later ballot by voter $i$. Given the correctness of the second shuffled proved with $\shuffleproof'$, there will be a selected ballot $\selectedvote{i}$ encrypting the same value as $\preselectedvote{i}$. Finally, given the correctness of the mixnet and decryption proofs in $\tally$, either $\ctr$ by voter $i$, or a later ballot by voter $i$, will be counted in the final tally. 
  
  Now suppose a group $G_i$ corresponds to a voter $i$ in $\textsf{Unchecked}$. Then, by the same argument as for voters in $\textsf{Checked}$, we know that the tally must either drop all ballots or include one of the ballots cast by voter $i$.

  Finally, any remaining groups correspond to voters in $\textsf{Corrupted}$. Notice that any voter that is not in $\textsf{Checked}$ or $\textsf{Unchecked}$ must be in $\textsf{Corrupted}$. Since, each remaining group corresponds to an actual voter, and this voter is not in either of the former groups, it must indeed correspond to a voter in $\textsf{Corrupted}$.
\end{proof}

\end{document}